\documentclass[showpacs,twocolumn,aps,pra,longbibliography,superscriptaddress,notitlepage]{revtex4-2}
\usepackage{qcircuit}
\usepackage[dvips]{graphicx}
\usepackage{amsmath,amssymb,amsthm,mathrsfs,amsfonts,dsfont}
\usepackage{subfigure, epsfig}
\usepackage{braket}
\usepackage{bm}
\usepackage{enumerate}
\usepackage{algorithm}
\usepackage{algpseudocode}
\usepackage{diagbox}
\usepackage{extarrows}
\usepackage{physics}
\usepackage{color}
\usepackage{multirow}
\usepackage[marginal]{footmisc}
\usepackage{comment}
\usepackage{tikz}
\usepackage{gensymb}
\usepackage[]{qcircuit}
\usetikzlibrary{arrows}
\usetikzlibrary{shapes,fadings,snakes}
\usetikzlibrary{decorations.pathmorphing,patterns}
\usetikzlibrary{calc}
\usetikzlibrary{positioning}
\usepackage{hyperref}

\graphicspath{{./}}

\hypersetup{colorlinks=true, linkcolor=blue, citecolor=blue, urlcolor=blue}

\usepackage{comment}
\usepackage{dcolumn}
\usepackage{bm}
\usepackage{tikz}
\usepackage{ulem}
\usepackage{graphicx}
\usepackage{placeins}

\usepackage{varioref}

\labelformat{section}{#1}

\labelformat{figure}{#1}

\labelformat{proposition}{#1}

\labelformat{lemma}{#1}

\labelformat{theorem}{#1}

\labelformat{observation}{#1}

\labelformat{definition}{#1}

\labelformat{corollary}{#1}

\labelformat{problem}{#1}

\newtheorem{theorem}{Theorem}

\graphicspath{{Images/}}
\usepackage{ulem}

\begin{document}

\title{Memory dimension detection in open quantum dynamics via pseudo-control}

\let\oldaddcontentsline\addcontentsline
\renewcommand{\addcontentsline}[3]{}

\begin{abstract}
Characterising open quantum dynamics requires identifying the environmental degrees of freedom that shape the system's evolution. The memory dimension, defined as the size of the smallest effective memory needed to reproduce this influence, quantifies the dynamically relevant environmental resources and provides a common basis for comparing different many-body environments. Here we introduce a pseudo-control protocol that encodes information about the memory dimension in an experimentally accessible interference matrix, whose rank certifies a lower bound on this dimension without direct access to the environment or full process-tensor tomography. Applied to quantum many-body environments, the protocol allows flexible adjustment of the interaction and interval times, revealing how these parameters, environmental dynamics, and system-environment interactions shape detectable memory.
Our approach provides a new operational technique for characterising open quantum dynamics and probing the structure of system-environment interactions through their memory signatures.
\end{abstract}

\author{Chang Liu}
\email{harperlcq@gmail.com}
\affiliation{Yau Mathematical Sciences Center, Tsinghua University, Beijing 100084, China}

\author{Alexander Yosifov}
\affiliation{Clarendon Laboratory, University of Oxford, Parks Road, Oxford, OX1 3PU, United Kingdom}

\author{Ximing Wang}
\affiliation{Nanyang Quantum Hub, School of Physical and Mathematical Sciences, Nanyang Technological University, Singapore 639798, Singapore}

\author{Zhenhuan Liu}
\thanks{\href{mailto:qubithuan@gmail.com}{qubithuan@gmail.com}}
\affiliation{Quantum Research Center, Technology Innovation Institute (TII), Abu Dhabi, United Arab Emirates}

\author{Jinzhao Sun}
\email{jinzhao.sun.phys@gmail.com}
\affiliation{School of Physical and Chemical Sciences, Queen Mary University of London, London, E1 4NS, United Kingdom}

\maketitle

The spatio-temporal structure of the environment plays a central role in open quantum dynamics~\cite{Breuer2016NonMarkovian,deVega2017Dynamics}.
Environmental memory offers a signature of this structure: information leaked from past interactions can influence the system's subsequent evolution~\cite{Pollock2018OperationalMarkov,Pollock2018ProcessTensor}.
The resources required to retain this information are captured by the effective memory dimension---the minimum Hilbert-space dimension of a memory needed to model the process~\cite{Bisio2012MemoryCost,Budroni2019MemoryCost}.
This notion connects the physical structure of the environment to the resources required for simulation and control.
In many-body systems, memory reflects information spreading under internal dynamics and helps determine the resources needed to describe local relaxation and simulate quantum many-body dynamics~\cite{Lerose2021InfluenceMatrix,Lerose2023SpaceTimeDuality}.
Related memory costs arise in simulations of general open quantum dynamics~\cite{Luchnikov2019SimulationComplexity,Strathearn2018,Cygorek2022EnvironmentCompression} and in models for characterising and controlling temporally correlated noise~\cite{White2020,Fux2021OptimalControl,Butler2024Optimizing,White2025UnifiedCharacterization}.
Comparing these resource requirements across quantum and classical models also reveals advantages of quantum memory: certain stochastic processes can be simulated using a smaller memory dimension~\cite{Ghafari2019Dimensional,Elliott2020ExtremeDimensionality} or with higher accuracy~\cite{yang2023provably} in quantum models than in classical ones.
At the operational level, environmental memory also affects the performance of quantum communication~\cite{BowenMancini2004,KretschmannWerner2005,Mele2024OpticalFibersMemory,Cimini2020CorrelatedNoiseCapacity,Pirandola2021EnvironmentAssisted} and the requirements for fault-tolerant quantum computation~\cite{Bombin2016}.

In real-world studies, however, inferring the memory dimension is a complicated task.
Environmental memory resides in degrees of freedom that are usually inaccessible, so its dimension must be inferred indirectly from the system's interactions with the environment.
Existing methods infer properties of this memory, including its presence and the magnitude of associated non-Markovianity \cite{Wolf2008Assessing,Breuer2009Measure,Rivas2010Entanglement,Pollock2018OperationalMarkov,Goswami2021ExperimentalProcess,das2026hysteretic}, whether observed temporal statistics are classical \cite{Milz2020ClassicalMemory}, and whether the memory itself must be quantum \cite{Giarmatzi2021Witnessing,Backer2024LocalDisclosure,Backer2025EntropicWitness,yosifov2025emergence}.
Quantum combs and process tensors describe the system's response to arbitrary sequences of interventions \cite{Chiribella2008QuantumCircuit,Chiribella2009QuantumNetworks,Pollock2018ProcessTensor}; process-tensor tomography reconstructs this complete multi-step input-output map \cite{White2022ProcessTomography,Giarmatzi2025MultitimeTomography}, while structured methods learn compact representations of it \cite{White2025UnifiedCharacterization}.
Together, these approaches motivate a more focused question about memory dimension: can measurements of accessible outputs certify a lower bound on the memory dimension carried between interactions without reconstructing the complete multi-step map?


\begin{figure}[bp]
    \centering
    \includegraphics[width=0.89\linewidth]{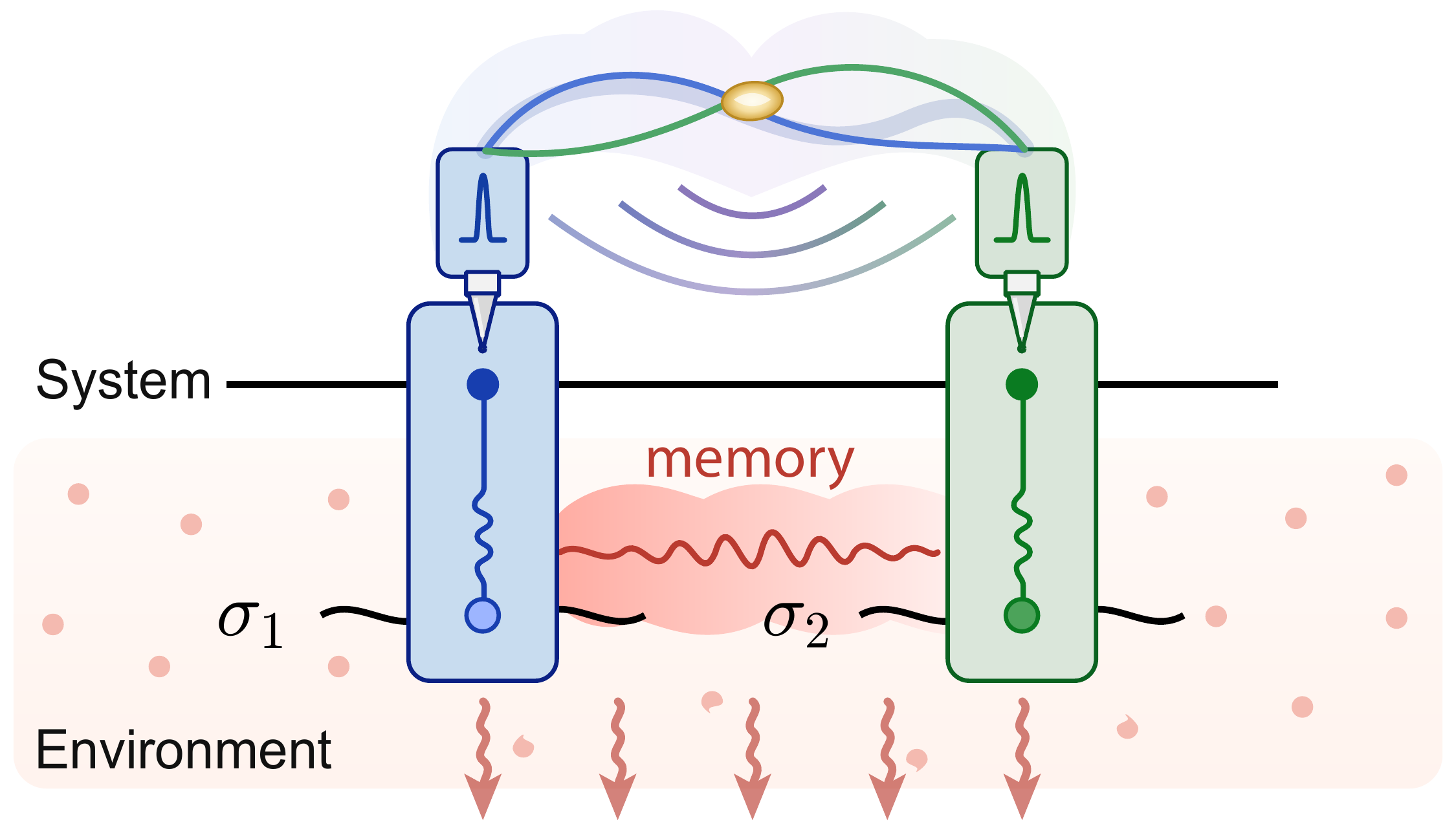}
    \caption{Schematic of the two-interaction process. The red region represents memory retained from the first interaction that can influence the second.}
    \label{fig:memory_setting}
\end{figure}


In this work, we introduce a pseudo-control protocol in which a system probes the same environment in two interactions separated by a waiting interval, Fig.~\ref{fig:memory_setting}.
During the interval, the system is decoupled while the environment evolves and may retain information from the first interaction that influences the second.
By coherently comparing the two probing histories, the protocol encodes this temporal dependence in an interference matrix $M$, accessible through measurements on the control and retained outputs without direct access to the environment.
We prove that if $\operatorname{rank}(M)=r$, every compatible realisation of the two-interaction process requires a memory of dimension at least $\lceil\sqrt{r}\rceil$.
The interaction durations and the waiting interval can be adjusted to examine how the detectable memory dimension changes as information is exchanged with and evolves within the environment.
A reference system can further extend the range of memory dimensions detectable by the protocol.

We use this flexibility to study dissipative and nondissipative many-body environments with different internal dynamics.
{We find that the interaction and interval times affect how detectable memory builds up and changes as the environment evolves.} 
Varying the coupling geometry and coupling strength produces distinct rank signatures: at strong coupling, the rank decreases for a boundary geometry but not for a ladder geometry, even with the same internal environmental dynamics.
These signatures show that certified bounds on memory dimension can reveal information about the many-body environment.

\textit{Protocol and memory dimension bound.---}
We start with illustrating our protocol in Fig.~\ref{fig:pseudocontrol} using a simplified memory model for easier comprehension.
The system $S$ is idle between the interactions $\mathcal C_1$ and $\mathcal C_2$.
After $\mathcal C_1$, an environmental subsystem is passed to $\mathcal C_2$, while other outputs are discarded; $\mathcal C_2$ can also receive a fresh environmental input in state $\sigma_2$.
The environment $E$ can thereby exchange information with additional degrees of freedom, although its detailed evolution is left unspecified in this model.
The memory dimension $d_{\mathrm m}$ is the smallest Hilbert-space dimension among all compatible realisations, shown by the red wire in a minimal realisation.
Degrees of freedom used and discarded within either channel do not count towards $d_{\mathrm m}$.

We use pseudo-control to turn the correlation between the two interaction times into a dependence between the rows and columns of an interference matrix at the output~\cite{nakayama2013universal}.
The circuit coherently routes $S$ through one interaction or the other, following earlier interferometric and coherent-control schemes \cite{Oi2003Interference,Chiribella2019Trajectories,Abbott2020CoherentControl,Kristjansson2021Latent}.
The control qubit $c$ is prepared in $\ket{+}$, while $S$ and an isolated reference $R$ are prepared in a pure state $\ket{\Psi}_{RS}$, with $\rho_S=\operatorname{Tr}_R[\ket{\Psi}\bra{\Psi}_{RS}]$.
Controlled \texttt{SWAP}s send $S$ through $\mathcal C_1$ on the $\ket{0}_c$ branch and through $\mathcal C_2$ on the $\ket{1}_c$ branch; the bypassed interaction receives the fixed input $\rho_2$ or $\rho_1$, respectively.

After all other outputs are traced out, the state retained on $cRS$ is
\begin{equation}
\omega_{cRS}
=
\sum_{i,j=0}^{1}
\ket{i}\!\bra{j}_c\otimes\omega_{ij}^{RS}.
\label{eq:control_block_decomposition}
\end{equation}
The off-diagonal control block $M_{RS}:=\omega_{01}^{RS}$ contains the interference between the two routes; its rank gives the memory-dimension bound stated below.

\begin{figure}[htbp]
    \centering
    \includegraphics[width=0.92\linewidth]{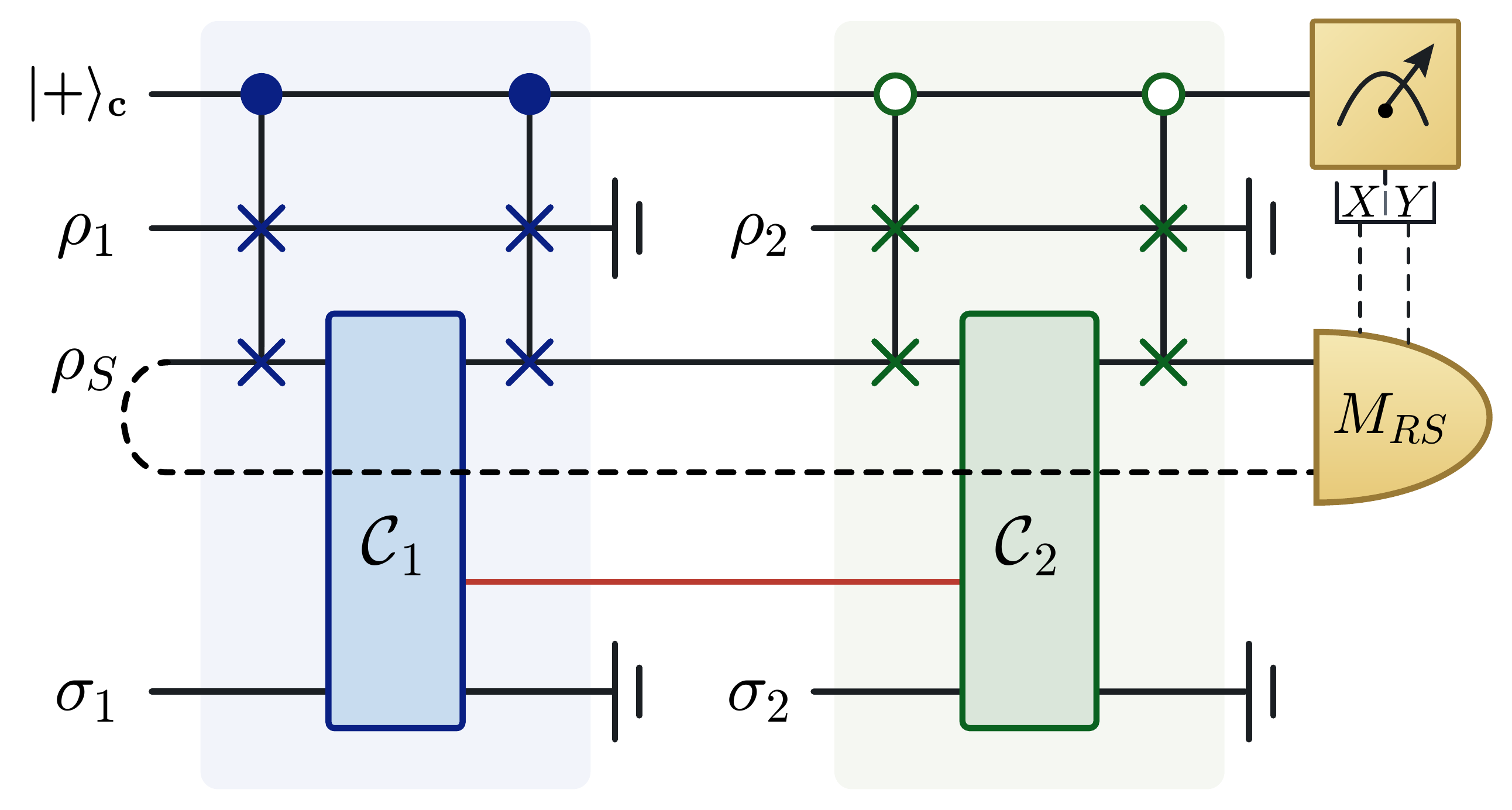}
    \caption{Pseudo-control circuit for detecting memory dimension. From top to bottom, the wires show the control $c$, auxiliary system inputs $\rho_1,\rho_2$, the input system $S$ with its isolated reference $R$ (dashed), and local environmental inputs $\sigma_1,\sigma_2$. The blue and green boxes denote interaction channels $\mathcal C_1$ and $\mathcal C_2$, respectively, while the red wire carries memory with $\dim\mathcal H_m=d_{\mathrm m}$. Filled and open circles indicate \texttt{SWAP}s controlled on $\ket{1}_c$ and $\ket{0}_c$, respectively; terminal bars mark discarded outputs. The symbols at right indicate $X/Y$ measurements on $c$ and measurement of $RS$ to reconstruct $M_{RS}$.
}
    \label{fig:pseudocontrol}
\end{figure}

\begin{theorem}[Memory-dimension lower bound]
\label{thm:memory_bound}
Let $\ket{\Psi}_{RS}$ be a pure reference-system input, and define $M_{RS}:=\omega_{01}^{RS}$ as the off-diagonal control block in Eq.~\eqref{eq:control_block_decomposition}.
If the two-interaction process has memory dimension $d_{\mathrm m}$, then
$
\operatorname{rank}(M_{RS})\le d_{\mathrm m}^{2}.
 $
Hence
\begin{equation}
d_{\mathrm m}
\ge
\left\lceil
\sqrt{\operatorname{rank}(M_{RS})}
\right\rceil .
\label{eq:memory_dimension_lower_bound}
\end{equation}
\end{theorem}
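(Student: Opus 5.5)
The plan is to purify every component and insert a resolution of the identity on the memory wire. This should write $M_{RS}$ as a sum of $d_{\mathrm m}^2$ rank-one operators, one for each pair of memory basis labels $(k,l)$. Fix a minimal realisation, so the red wire $m$ has $\dim\mathcal H_m=d_{\mathrm m}$.

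First I will dilate all open components. Choose Stinespring isometries $V_1:\mathcal H_{X_1}\otimes\mathcal H_{E_1}\to \mathcal H_{Y_1}\otimes\mathcal H_{m}\otimes\mathcal H_{F_1}$ for $\mathcal C_1$ (after purifying $\sigma_1$) and $V_2:\mathcal H_{X_2}\otimes\mathcal H_m\otimes\mathcal H_{E_2}\to\mathcal H_{Y_2}\otimes\mathcal H_{F_2}$ for $\mathcal C_2$ (after purifying $\sigma_2$). Also purify $\rho_1,\rho_2$ into states $\ket{p_1},\ket{p_2}$ with purifiers $P_1,P_2$. Then the whole circuit produces a pure state $\ket{0}_c\ket{\Phi_0}+\ket{1}_c\ket{\Phi_1}$ (up to normalisation). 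The off-diagonal block is $M_{RS}\propto\operatorname{Tr}_J\ket{\Phi_0}\!\bra{\Phi_1}$, where $J$ collects every register except $R$ and the $S$ output register. Because discarded degrees of freedom inside each channel are absorbed into $F_1,F_2$, only $m$ links the two interactions.

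Next I will expand each branch in a basis $\{\ket{k}\}_{k=1}^{d_{\mathrm m}}$ of $\mathcal H_m$.

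In branch $0$, $\mathcal C_1$ acts on $\ket{\Psi}_{RS}$, giving $\sum_k\ket{a_k}_{R Y_1 F_1}\ket{k}_m$. The controlled \texttt{SWAP} returns $Y_1$ to the $S$ register, while the $\rho_1$ wire still holds $\ket{p_1}$. Then $\mathcal C_2$ acts on $\ket{p_2}\otimes\ket{k}_m$, giving a state $\ket{g_k}$ supported entirely on junk registers. Hence
\[
\ket{\Phi_0}=\sum_k \ket{a_k}_{RS F_1}\ket{p_1}\ket{g_k}.
\]

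In branch $1$, $\mathcal C_1$ acts on $\ket{p_1}$, giving $\sum_l\ket{c_l}_{J_1}\ket{l}_m$ with $J_1$ junk. Then $\mathcal C_2$ acts on $\ket{\Psi}\otimes\ket{l}_m$, giving $\ket{e_l}_{RS F_2 \ldots}$ after the \texttt{SWAP}, while the $\rho_2$ wire holds $\ket{p_2}$. Hence
\[
\ket{\Phi_1}=\sum_l\ket{c_l}\ket{e_l}\ket{p_2}.
\]

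The key step is the contraction. In the term $(k,l)$ of $\operatorname{Tr}_J\ket{\Phi_0}\!\bra{\Phi_1}$, the junk registers split into two groups that each get fully contracted:
\begin{itemize}
\item the $\mathcal C_1$-side junk ($F_1$, the $\rho_1$ wire and $P_1$): contracting $\ket{a_k}\ket{p_1}$ against $\bra{c_l}$ leaves a ket $\ket{x_{kl}}_{RS}$;
\item the $\mathcal C_2$-side junk: contracting $\ket{g_k}$ against $\bra{e_l}\bra{p_2}$ leaves a bra $\bra{y_{kl}}_{RS}$.
\end{itemize}
Since $RS$ appears as a ket only through $a_k$ and as a bra only through $e_l$, the partial trace factorises. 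This gives
\[
M_{RS}\propto\sum_{k,l=1}^{d_{\mathrm m}}\ket{x_{kl}}\!\bra{y_{kl}},
\]
so $\operatorname{rank}(M_{RS})\le d_{\mathrm m}^2$. Because $d_{\mathrm m}$ is an integer, Eq.~\eqref{eq:memory_dimension_lower_bound} follows.

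The main obstacle is the bookkeeping of the controlled \texttt{SWAP}s: one has to check that, in each branch, the register ending up in the retained $S$ slot is the correct channel output, and that every other wire is genuinely traced with matching partners. This matching is what makes each $(k,l)$ term factorise into a ket from the $\mathcal C_1$ side and a bra from the $\mathcal C_2$ side rather than a higher-rank sum. I will verify it by writing both branches explicitly on a fixed register labelling before contracting.
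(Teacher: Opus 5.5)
Your proposal is correct and is essentially the paper's own argument. Inserting a basis on the forward ($k$) and backward ($l$) memory bonds makes each $(k,l)$ term factorise into a ket from the first-interaction region and a bra from the second-interaction region, which is exactly the paper's $(I_R\otimes A_{\alpha\beta})\ket{\Psi}\!\bra{\Psi}_{RS}(I_R\otimes B_{\alpha\beta})$ decomposition, and subadditivity of rank then gives $\operatorname{rank}(M_{RS})\le d_{\mathrm m}^{2}$. Your up-front Stinespring purification simply absorbs the open-channel case, which the paper treats separately after first proving the unitary case.
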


The bound in Eq.~\eqref{eq:memory_dimension_lower_bound} is derived using tensor-network methods~\cite{wood2015tensor} in Appendix~\ref{app:tensor_network_memory_bound}, with an equivalent quantum-comb derivation given in Appendix~\ref{sec:comb-memory-rank}, and graphical conventions summarised in Appendix~\ref{app:preliminaries}.
The bound is one-sided: a higher rank of $M_{RS}$ certifies a larger memory dimension, but a low rank need not mean that the process has little memory.
The rank that can be observed also depends on the chosen reference-system input.
For a pure input of Schmidt rank $s=\operatorname{rank}(\rho_S)$, we have
\[
\operatorname{rank}(M_{RS})\le s\,d_S,
\]
where $d_S=\dim\mathcal H_S$.
Increasing $s$ from $1$ to $d_S$ raises this rank limit from $d_S$ to $d_S^2$ without changing the memory of the process.
A larger reference can therefore support a stronger memory-dimension certificate, while requiring characterisation on a larger retained $RS$ output.
As we shall see in the later section, we could choose a maximally entangled state to enhance the memory detection capability.

Operationally, $M_{RS}$ can be reconstructed directly from the retained circuit output, as detailed in Appendix~\ref{app:operational_reconstruction}. Control measurements in the $X$ and $Y$ bases, followed by conditional tomography on $RS$, determine the unnormalised conditional operators $\Omega_\alpha=p_\alpha\rho_{RS}^{\alpha}$, for $\alpha\in\{+,-,+i,-i\}$. 
Here $p_\alpha$ is the outcome probability and $\rho_{RS}^{\alpha}$ is the corresponding normalised conditional state. 
Their linear combination yields
\[
M_{RS}
=
\frac{1}{2}
\left[
\Omega_+-\Omega_-
-i\left(\Omega_{+i}-\Omega_{-i}\right)
\right].
\]
Since $M_{RS}$ need not be Hermitian, we determine its rank from its singular-value spectrum.

To summarise, the protocol has two essential parts: the two system-environment interactions, represented by $\mathcal C_1$ and $\mathcal C_2$, and the evolution of the environment between them.
The two interactions encode information about the interval memory dimension in $M_{RS}$.
The environmental evolution between them generates and shapes the memory that we seek to detect.
In Appendix~\ref{app:extensions-and-robustness}, we show that the protocol detects classical as well as quantum memory, and examine imperfect switch-off of the system-environment interaction during the interval.
Specifically, when the system-environment interaction cannot be fully switched off during the interval, the resolved rank $r_\epsilon$ can still provide a lower bound on the memory dimension, $d_{\mathrm m}\ge\lceil\sqrt{r_\epsilon}\rceil$, where
\[
r_\epsilon:=\#\{k:s_k(M_{RS})>\epsilon\},
\]
and the choice of $\epsilon$ depends on the form of the residual interaction.

\textit{Numerical protocol and setup.---}
The preceding analysis describes the memory connecting the two interaction processes in a simplified model without specifying the microscopic environmental dynamics.
We now apply the protocol to explicit many-body environments, where the environment $E$ evolves throughout the process and may also dissipate information.
During each of the two interactions, the joint state evolves for an interaction time $\tau$ under
\begin{equation}
\begin{aligned}
\mathcal L_{SE}(\rho_{SE})
={}&-i\left[
H_S+H_E+\lambda H_{SE},\rho_{SE}
\right] \\
&+\left(\mathrm{id}_S\otimes\mathcal D\right)(\rho_{SE}),
\end{aligned}
\label{eq:main_interaction_generator}
\end{equation}
whereas between the two interactions the system-environment coupling is switched off, $S$ remains idle, and $E$ evolves for an interval time $T$ under
\begin{equation}
\mathcal L_E(\rho_E)
=
-i\left[
H_E,\rho_E
\right]
+
\mathcal D(\rho_E).
\label{eq:main_interval_generator}
\end{equation}
Here $H_S$ and $H_E$ describe the internal dynamics of the system and environment, $H_{SE}$ is the system-environment interaction Hamiltonian, $\lambda$ is the coupling strength, and $\mathcal D$ is a dissipator acting on $E$. We study how the resolved rank depends on the interaction time $\tau$, the interval time $T$, the coupling strength $\lambda$, the internal dynamics of the environment, and the geometry of the system-environment contact.

In the dissipative simulations, the same dissipator remains active during both interactions and throughout the interval.
Unless stated otherwise, we use a boundary contact between $S_n$ and $E_1$, i.e., the last qubit in the system and the first qubit in the environment
\begin{equation}
H_{SE}^{\rm bdry}
=
\sum_{\mu\in\{x,y,z\}}
J_c^\mu\,
\sigma_{S,n}^\mu
\sigma_{E,1}^\mu,
\label{eq:main_boundary_contact}
\end{equation}
so only these two boundary sites are directly coupled.
The setup is illustrated in Fig.~\ref{fig:numerical_tau_wait}(a).

\begin{figure}[t]
    \centering
    \includegraphics[width=\linewidth]{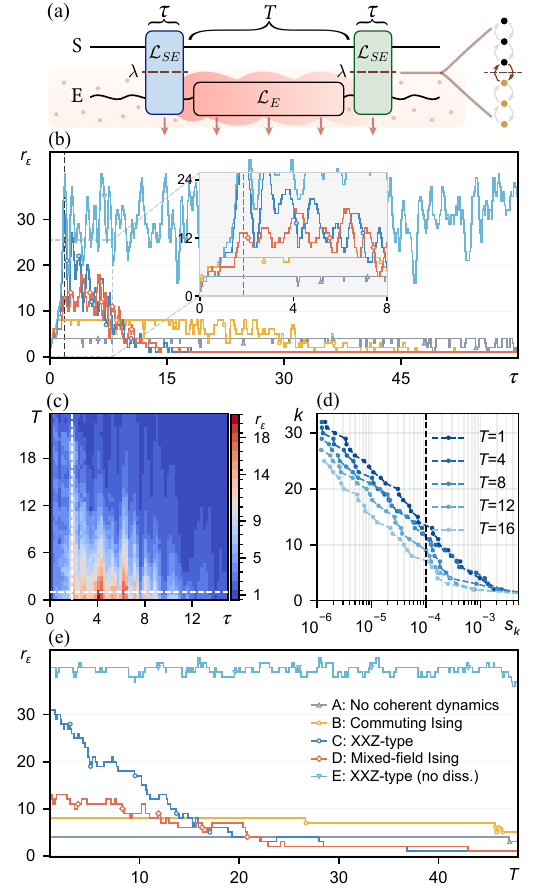}
    \caption{Temporal dependence of the resolved rank $r_\epsilon$.
    (a) Schematic of the numerical model: two system-environment interactions of duration $\tau$ are separated by an interval of duration $T$, with the boundary contact between $S_n$ and $E_1$.
    (b) $r_\epsilon$ versus interaction time $\tau$ for models A-E at $\lambda=1$ and $T=1$; the inset shows the short-time regime, and the vertical dashed line marks the reference time $\tau_0$.
    (c) $r_\epsilon(\tau,T)$ for model D at $\lambda=1$; the white dashed lines indicate the cuts shown in panels (b) and (e).
    (d) Singular values of $M_{RS}$ for model D at selected interval times; the vertical dashed line marks the threshold $\epsilon=10^{-4}$.
    (e) $r_\epsilon$ versus interval time $T$ at $\lambda=1$ and $\tau=\tau_0$.
    }
    \label{fig:numerical_tau_wait}
\end{figure}

To test the protocol across qualitatively different environmental dynamics, we use five representative cases that provide controlled comparisons of internal dynamics and dissipation.
Model A has $H_E=0$ and provides a no-propagation baseline.
Model B is a commuting longitudinal-field Ising chain.
Model C is an XXZ-type chain, widely used to study interacting one-dimensional spin dynamics
\cite{Bertini2021FiniteTemperature,Gopalakrishnan2023AnomalousTransport}.
Model D is an Ising chain with transverse and longitudinal fields, a standard nonintegrable spin-chain model
\cite{Kim2013Ballistic,Noh2021OperatorGrowth}.
Models A-D include local depolarisation acting on the environment, allowing us to examine how dissipation affects the memory resolved by the protocol.
Model E has the same coherent Hamiltonian as model C but no depolarisation, so the comparison between C and E isolates the effect of dissipation.
The full Hamiltonians, dissipator, and numerical parameters are given in Appendix~\ref{sec:numerical_investigation}.

For simulations in the following sections, we use the maximally entangled reference-system input
$ 
\ket{\Phi_{d_S}}_{RS}
=
\frac{1}{\sqrt{d_S}}
\sum_{j=0}^{d_S-1}
\ket{j}_R\ket{j}_S,
\label{eq:main_numerical_inputs}
$
which has full Schmidt rank $s=d_S$ {to maximise the memory detection capability as shown in Theorem \ref{thm:memory_bound}}.
The two auxiliary inputs and the initial environment state are maximally mixed,
$
\rho_1=\rho_2=\frac{I_{d_S}}{d_S} $,
 $
\rho_E(0)=\frac{I_{d_E}}{d_E}.
 $

\textit{Memory visibility with finite interaction time.---}
The interaction time $\tau$ sets the duration of each system-environment interaction.
At $\tau=0$, $S$ does not interact with $E$, so no information is exchanged and $M_{RS}$ remains at its rank-one baseline.
As $\tau$ increases, the environment starts to encode the memory information in the matrix $M_{RS}$.
Accordingly, $r_\epsilon$ initially increases for all five models, as shown in Fig.~\ref{fig:numerical_tau_wait}(b).
At longer interaction times, $r_\epsilon$ generally decreases for models A-D, whereas the nondissipative model E retains a larger rank.
Since models C and E have the same coherent Hamiltonian, their comparison shows that depolarisation contributes to this suppression.
Increasing $\tau$ therefore has two competing effects in the dissipative models: it allows more information of the memory dimension to be encoded in the matrix $M_{RS}$, but also gives depolarisation more time to act. 
We choose the reference time $\tau_0$ from the early high-rank region of the interaction-time scan and keep it fixed for all subsequent simulations.

\textit{Memory decay between the interactions.---}
We now show how the protocol resolves memory decay as the environment evolves between the two interactions, without requiring direct access to $E$. {Both $S$ and $E$ are taken to be three-qubit chains.}
We fix $\tau=\tau_0$ and $\lambda=1$ and vary the interval time $T$.
When depolarisation is present, increasing $T$ gives the dissipative dynamics in models A-D more time to weaken the information left by the first interaction. 
Accordingly, $r_\epsilon$ generally decreases with $T$ [Fig.~\ref{fig:numerical_tau_wait}(e)] as individual singular values fall below the threshold $\epsilon$ [Fig.~\ref{fig:numerical_tau_wait}(d)].
The comparison between models C and E, which differ only by depolarisation, further shows that dissipation contributes to this suppression.

The broader \((\tau,T)\) landscape in Fig.~\ref{fig:numerical_tau_wait}(c) shows that this overall decrease is accompanied by local increases. 
During the interval, $S$ is idle while $E$ continues to evolve.
Coherent dynamics can therefore redistribute information within $E$ and allow it to influence $S$ again before dissipation suppresses it.
Thus, the interval evolution can both redistribute environmental memory and reduce what remains visible in the second interaction.

\textit{Memory in many-body dynamics.---}
We now examine how many-body dynamics and coupling geometry affect the memory visible to the protocol.
We first use the boundary contact in Eq.~\eqref{eq:main_boundary_contact}, fix $\tau=\tau_0$ and $T=1$, and vary the coupling strength $\lambda$.
At $\lambda=0$, the system does not interact with the environment, so the environment cannot carry any influence from the first interaction to the second and hence $r_\epsilon=1$.
At weak and moderate coupling, increasing $\lambda$ makes this influence easier to detect, and the rank increases [Fig.~\ref{fig:numerical_coupling_geometry}(a)].

\begin{figure}[b]
    \centering
    \includegraphics[width=\linewidth]
    {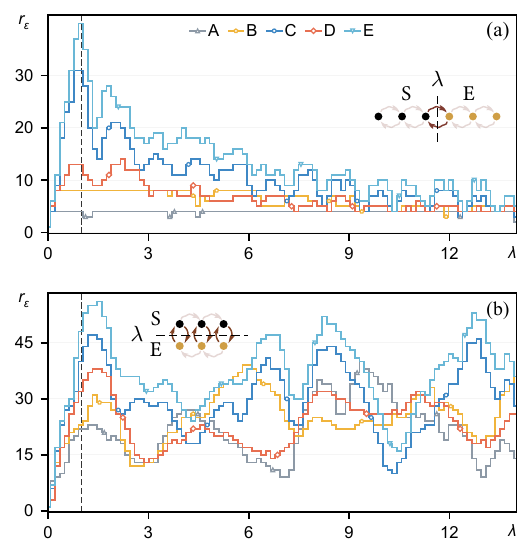}
    \caption{Rank $r_\epsilon$ versus coupling strength for two coupling geometries. Models A-E are evaluated at $\tau=\tau_0$, $T=1$, and $\epsilon=10^{-4}$.
    (a) Boundary contact coupling between $S_n$ and $E_1$, defined in Eq.~\eqref{eq:main_boundary_contact}.
    (b) Ladder contact coupling each $S_j$ directly to the corresponding $E_j$, defined in Eq.~\eqref{eq:main_ladder_contact}.
    The vertical dashed line marks $\lambda=1$ used in Fig.~\ref{fig:numerical_tau_wait}.
    }
    \label{fig:numerical_coupling_geometry}
\end{figure}

The rank initially increases in all cases, but how far it rises depends on the internal dynamics of the environment.
When $H_E=0$ (model A), the environment qubits do not interact, so only the directly coupled qubit $E_1$ can carry memory between the two interactions.
The rank then reaches the one-qubit ceiling $r_\epsilon=4$, giving the tight lower bound $d_{\mathrm m}\ge2$.
When the environment qubits interact, more than one qubit can contribute.
In the commuting Ising case (model B), the $E_1$-$E_2$ interaction allows $E_2$ to contribute, although the commuting dynamics still limits how far information can spread.
The rank reaches $r_\epsilon=8$, giving $d_{\mathrm m}\ge3$.
The remaining models allow broader spreading through the environment and reach still larger ranks.

At stronger coupling, however, the increase does not continue.
For the boundary contact, $r_\epsilon$ decreases again as $\lambda$ becomes large.
The same decrease appears in model E without depolarisation, showing that the downturn cannot be attributed to dissipation.
Since the boundary contact acts only on the $S_n$-$E_1$ link, a natural explanation is that this pair becomes so strongly coupled that the rest of the environment is harder to reach during the fixed interaction time.
To test this explanation, we replace the boundary contact in Eq.~\eqref{eq:main_boundary_contact} with a ladder contact that couples each system site to the corresponding environment site
\begin{equation}
H_{SE}^{\rm ladder}
=
\frac{1}{n}
\sum_{j=1}^{n}
\sum_{\mu\in\{x,y,z\}}
J_c^\mu
\sigma_{S,j}^\mu
\sigma_{E,j}^\mu .
\label{eq:main_ladder_contact}
\end{equation}
The factor $1/n$ gives the ladder and boundary contacts the same spectral width.
With the ladder contact, the ranks no longer show the same decrease at large $\lambda$, as shown in Fig.~\ref{fig:numerical_coupling_geometry}(b).
Directly coupling every environment site also lifts the boundary-contact rank ceilings found in the nonpropagating (A) and commuting-Ising (B) cases.
These results support the explanation that strong boundary coupling creates an access bottleneck to the rest of the environment, which is removed when all sites are coupled directly. 
The corresponding $(\lambda,T)$ landscapes are shown in Appendix~\ref{sec:numerical_investigation}.
Together, these results show that the memory visible to the protocol depends on both how information spreads within the environment and how the system accesses it. 

\textit{Discussion.---}
We have introduced a pseudo-control protocol that certifies a lower bound on the memory dimension of an open quantum process from accessible outputs.
This requires neither direct access to the environment nor reconstruction of the full multitime process.
Our numerical results show that the detectable memory depends on both information spreading within the environment and how the system couples to it, illustrating the protocol's ability to probe the structure of system-environment interactions.
Future work could explore simpler circuit implementations to reduce the experimental overhead.
The protocol could also help clarify how information propagation in many-body environments affects the memory dimension required to describe the resulting open-system dynamics.

\textit{Acknowledgements.---}
We thank Mile Gu, Yunlong Xiao, Zihan Hao, Ingo Roth, and Aditya Iyer for insightful discussions. As this work was nearing completion, we became aware of related forthcoming work by Lisa Kolesnyk and Ingo Roth~\cite{KolesnykRothForthcoming} on probing the effective environment dimension of non-Markovian, gate-dependent noise models through randomized benchmarking. 
C.L. acknowledges support from Quantum Science and Technology–National Science and Technology Major Project under Grant No. 2024ZD0300500. X.W. would like to acknowledge the support from the National Research Foundation through the Singapore Ministry of Education Tier 1 Grant RT4/23 and RG91/25 (S), the NRF Investigatorship on Quantum-Enhanced Agents (Grant No. NRF-NRFI09-0010) and the National Quantum Office, hosted in A*STAR, under its Centre for Quantum Technologies Funding Initiative (S24Q2d0009). A.Y. is supported by UKRI Future Leaders Fellowship (Grant No. 10128920).

\bibliography{references_v3}

\clearpage
\newpage
\widetext

\appendix

\let\addcontentsline\oldaddcontentsline


\addcontentsline{toc}{section}{Appendix}

\tableofcontents

\section{Preliminaries}
\label{app:preliminaries}

The main proofs are expressed in tensor-network (TN) notation. We begin by specifying the graphical conventions used throughout this appendix. This notation makes the contraction and factorisation structure of the relevant tensors explicit.

\subsection{Graphical dictionary}

A tensor is represented by a node with attached lines, or legs. Each leg is associated with a Hilbert space and has the dimension of that space. An open leg represents an uncontracted (or free) index. More generally, a node with $n$ open legs represents an $n$-index tensor. The elementary conventions are summarized in Fig.~\ref{fig:a1_tensors}. 

\begin{figure}[b]
    \centering
    \includegraphics[width=0.82\textwidth]{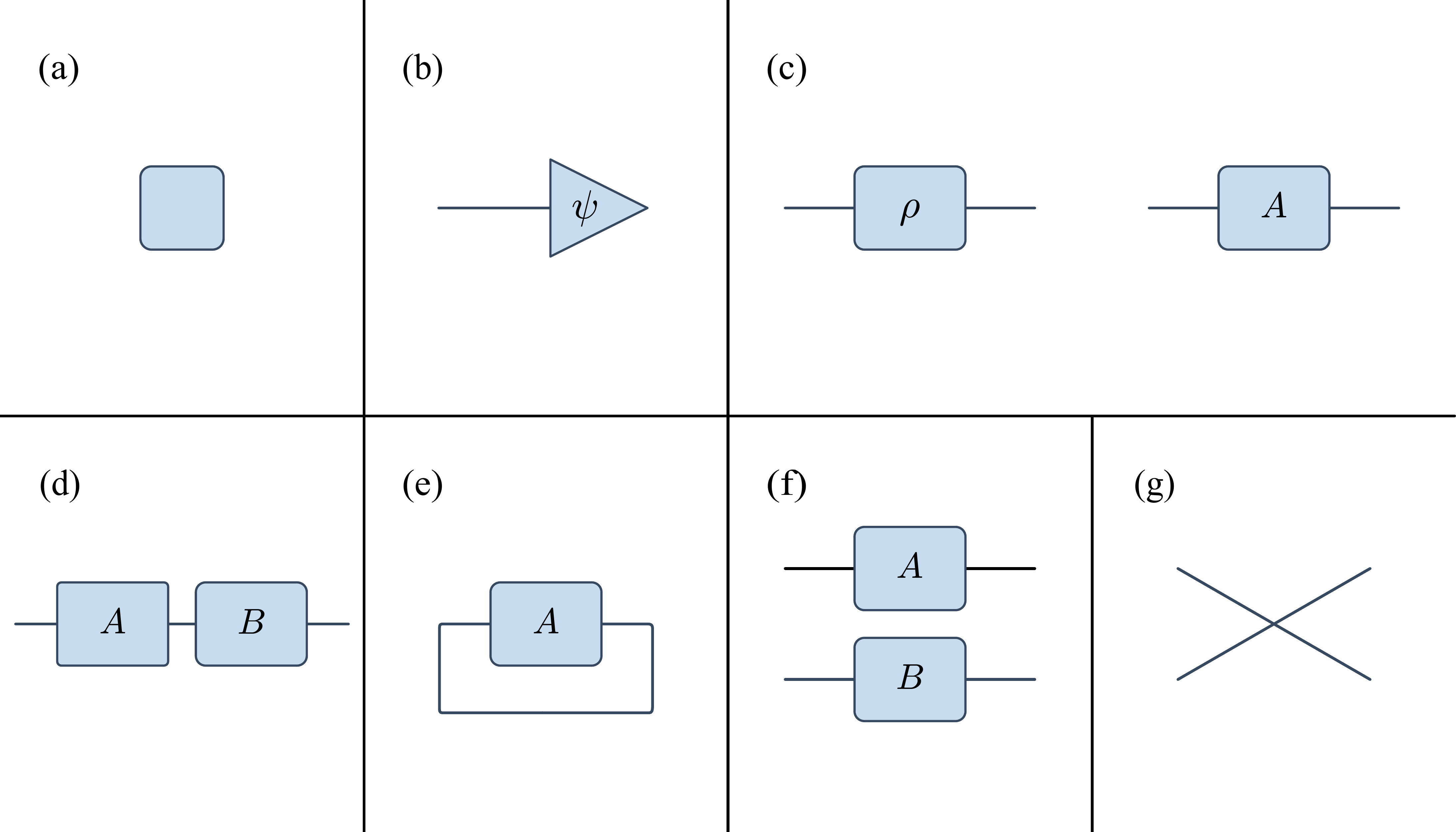}
    \caption{Graphical TN conventions.
        (a) A scalar.
        (b) A vector.
        (c) An operator.
        (d) Index contraction, illustrated by matrix multiplication.
        (e) The trace of an operator.
        (f) A tensor product.
        (g) A permutation of tensor factors, illustrated by the \texttt{SWAP} map.
    }
    \label{fig:a1_tensors}
\end{figure}

A tensor with no open legs represents a scalar, as shown in Fig.~\ref{fig:a1_tensors}(a). A vector
\begin{equation}
    \ket{\psi}=\sum_i \psi_i \ket{i}
\end{equation}
is represented by a node with one open leg, as in Fig.~\ref{fig:a1_tensors}(b). An operator
\begin{equation}
    A=\sum_{i,j} A_{ij}\ket{i}\!\bra{j}
\end{equation}
is represented by a node with two open legs, corresponding to its ket and bra indices shown in Fig.~\ref{fig:a1_tensors}(c). Density operators are represented in the same way. Connecting two compatible legs denotes contraction of the corresponding indices. For example, joining the legs associated with the common index $j$ of $A_{ij}$ and $B_{jk}$ gives
\begin{equation}
    (AB)_{ik}=\sum_j A_{ij}B_{jk},
\end{equation}
as illustrated in Fig.~\ref{fig:a1_tensors}(d). Connecting the ket and bra legs of an operator denotes its trace
\begin{equation}
    \operatorname{Tr}(A)=\sum_i A_{ii},
\end{equation}
as shown in Fig.~\ref{fig:a1_tensors}(e). The juxtaposition of disconnected tensors denotes their tensor product, as in Fig.~\ref{fig:a1_tensors}(f). A crossing at which no tensor is placed denotes only a permutation of tensor factors. In particular, the crossing shown in Fig.~\ref{fig:a1_tensors}(g) represents the \texttt{SWAP} map. Such a crossing does not introduce an additional tensor or physical operation beyond the indicated permutation.

Throughout the proofs, we use two visually distinct types of boxes. Sharp-cornered boxes represent physical circuit elements, whereas rounded boxes represent tensors obtained after contractions, regrouping of indices, or coarse-graining.

\section{Operational reconstruction of the interference matrix}
\label{app:operational_reconstruction}

The interference matrix $M_{RS}$ is not itself a directly measured observable. Here we show how it can be reconstructed from experimentally accessible outputs of the pseudo-control circuit. Measurements of the control in the $X$ and $Y$ bases, combined with conditional tomography of $RS$, determine the two components needed to recover $M_{RS}$ exactly.

After the pseudo-control circuit and tracing out all other registers, let $\omega_{cRS}$ denote the retained state on $cRS$. In the computational basis of the control, we write
\begin{equation}
\omega_{cRS}
=
\sum_{i,j=0}^{1}
\ket{i}\!\bra{j}_{c}
\otimes
\omega_{ij}^{RS}.
\label{eq:appendix_block_decomposition}
\end{equation}
The interference matrix is the off-diagonal control block
\begin{equation}
M_{RS}
:=
\omega_{01}^{RS}.
\label{eq:appendix_M_definition}
\end{equation}
Since $\omega_{cRS}$ is Hermitian
\begin{equation}
\omega_{10}^{RS}
=
M_{RS}^{\dagger}.
\label{eq:appendix_M_dagger}
\end{equation}
The operator $M_{RS}$ is generally neither Hermitian nor positive. As an off-diagonal control block, it is not obtained from a computational-basis measurement of the control alone. Instead, it can be reconstructed from measurements of the control in the $X$ and $Y$ bases together with conditional tomography on the retained $RS$ output.

For a control outcome
$\alpha\in\{+,-,+i,-i\}$,
let $p_{\alpha}$ denote its probability and
$\rho_{RS}^{\alpha}$ the corresponding normalised conditional state of $RS$.
We define the associated unnormalised conditional operator
\begin{equation}
\Omega_{\alpha}
:=
p_{\alpha}\rho_{RS}^{\alpha}
=
\bra{\alpha}_{c}
\omega_{cRS}
\ket{\alpha}_{c}.
\label{eq:appendix_conditional_operator}
\end{equation}
The probability $p_{\alpha}$ is obtained from the control measurement, while
$\rho_{RS}^{\alpha}$ is obtained by tomography conditioned on that outcome.

\paragraph{$X$-basis readout.}

We first measure the control in the $X$ basis
\begin{equation}
\ket{\pm}_{c}
=
\frac{1}{\sqrt{2}}
\left(
\ket{0}_{c}
\pm
\ket{1}_{c}
\right).
\label{eq:appendix_X_basis}
\end{equation}
Experimentally, this measurement can be implemented by applying a Hadamard gate to the control followed by computational-basis readout. Substituting Eq.~\eqref{eq:appendix_block_decomposition} into Eq.~\eqref{eq:appendix_conditional_operator} gives
\begin{align}
\Omega_{+}
&=
\frac{1}{2}
\left(
\omega_{00}^{RS}
+
M_{RS}
+
M_{RS}^{\dagger}
+
\omega_{11}^{RS}
\right),
\label{eq:appendix_plus_expansion}
\\
\Omega_{-}
&=
\frac{1}{2}
\left(
\omega_{00}^{RS}
-
M_{RS}
-
M_{RS}^{\dagger}
+
\omega_{11}^{RS}
\right).
\label{eq:appendix_minus_expansion}
\end{align}
Taking the difference cancels the two diagonal control blocks
\begin{equation}
\Omega_{+}
-
\Omega_{-}
=
M_{RS}
+
M_{RS}^{\dagger}.
\label{eq:appendix_X_difference}
\end{equation}
Thus, the $X$-basis data determine the combination
$M_{RS}+M_{RS}^{\dagger}$.

\paragraph{$Y$-basis readout.}

We next measure the control in the $Y$ basis
\begin{equation}
\ket{\pm i}_{c}
=
\frac{1}{\sqrt{2}}
\left(
\ket{0}_{c}
\pm
i\ket{1}_{c}
\right).
\label{eq:appendix_Y_basis}
\end{equation}
This measurement can be implemented by applying $S^{\dagger}$ followed by a Hadamard gate to the control before computational-basis readout. Using Eq.~\eqref{eq:appendix_conditional_operator}, we obtain
\begin{align}
\Omega_{+i}
&=
\frac{1}{2}
\left(
\omega_{00}^{RS}
+
iM_{RS}
-
iM_{RS}^{\dagger}
+
\omega_{11}^{RS}
\right),
\label{eq:appendix_plusi_expansion}
\\
\Omega_{-i}
&=
\frac{1}{2}
\left(
\omega_{00}^{RS}
-
iM_{RS}
+
iM_{RS}^{\dagger}
+
\omega_{11}^{RS}
\right).
\label{eq:appendix_minusi_expansion}
\end{align}
Their difference is
\begin{equation}
\Omega_{+i}
-
\Omega_{-i}
=
i
\left(
M_{RS}
-
M_{RS}^{\dagger}
\right).
\label{eq:appendix_Y_difference}
\end{equation}

\paragraph{Reconstruction of $M_{RS}$.}

Combining Eqs.~\eqref{eq:appendix_X_difference} and
\eqref{eq:appendix_Y_difference} gives
\begin{equation}
M_{RS}
=
\frac{1}{2}
\left[
\Omega_{+}
-
\Omega_{-}
-
i
\left(
\Omega_{+i}
-
\Omega_{-i}
\right)
\right].
\label{eq:appendix_M_reconstruction}
\end{equation}
Hence the four unnormalised conditional operators obtained from the $X$- and $Y$-basis measurements reconstruct the interference matrix exactly. Equivalently
\begin{equation}
M_{RS}
=
\frac{1}{2}
\left[
p_{+}\rho_{RS}^{+}
-
p_{-}\rho_{RS}^{-}
-
i
\left(
p_{+i}\rho_{RS}^{+i}
-
p_{-i}\rho_{RS}^{-i}
\right)
\right].
\label{eq:appendix_M_reconstruction_explicit}
\end{equation}
Because $M_{RS}$ is generally non-Hermitian, its rank is determined from its singular-value spectrum rather than its eigenvalue spectrum.

\section{Tensor-network proof of Theorem 1}
\label{app:tensor_network_memory_bound}

We now prove Theorem~\ref{thm:memory_bound} using the tensor-network representation of the pseudo-control circuit. The key observation is that the rank of the interference matrix is limited by the dimensions of the tensor-network bonds crossing an appropriate matrix-rank cut. A memory of dimension $d_{\mathrm m}$ contributes one such bond on the forward branch and one on the backward branch, so the two memory indices together provide at most $d_{\mathrm m}^{2}$ independent contributions.

For clarity, we first suppress the reference $R$ and work with the system-only interference matrix
\begin{equation}
M:=\omega_{01}^{S}.
\label{eq:appendix_system_only_M}
\end{equation}
This setting makes the tensor structure and the rank-cut argument easiest to see. In this section, we first derive the six circuit steps shared by memoryless and finite-memory realisations, and then compare their contraction patterns. These results prove the memory bound in Theorem \ref{thm:memory_bound} (for a simplified setting in Fig.~\ref{fig:pseudocontrol} with $R = I$ and unitary channels).
At the end, we restore the isolated reference $R$ and show that the same memory bound applies to the reference-assisted interference matrix $M_{RS}$ used in the main text.

\subsection{Circuit notation and the off-diagonal input sector}
\label{subsec:appendix_C_setup}

To keep the notation compact, we label the five non-control registers as
\begin{equation}
(1,2,3,4,5)
=
(\rho_{1},\sigma_{1},\rho,\sigma_{2},\rho_{2}),
\label{eq:appendix_register_order}
\end{equation}
where $\rho\equiv\rho_{S}$ is the system input. The states $\rho_{1}$ and $\rho_{2}$ are the two auxiliary system inputs, while $\sigma_{1}$ and $\sigma_{2}$ are the local environmental inputs to the first and second interactions. Register $3$ is the routed system register and contains the retained system output after the full circuit.

\begin{figure}[htbp]
    \centering
    \includegraphics[width=0.80\linewidth]{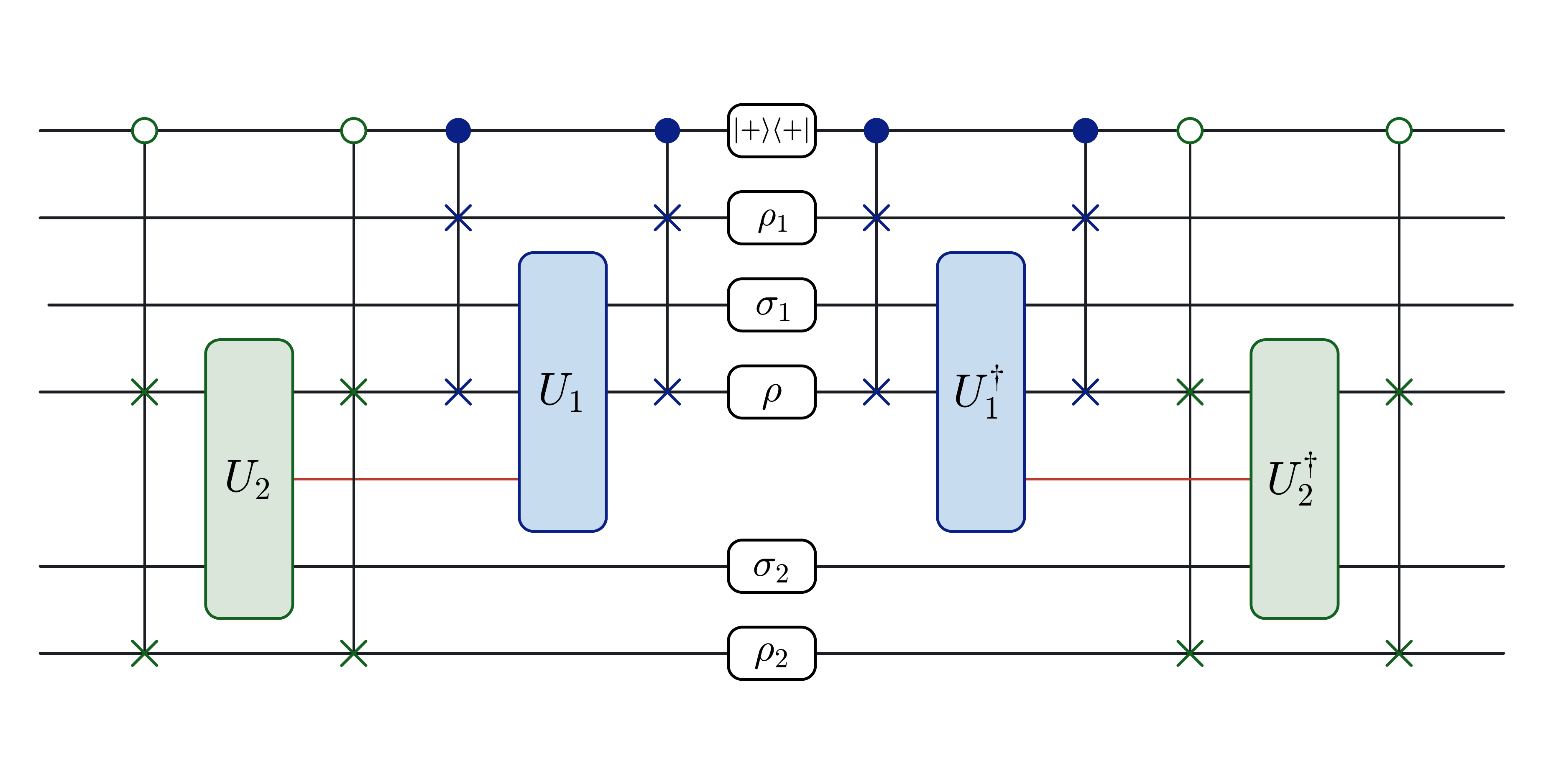}
    \caption{TN representation of the pseudo-control circuit used in the proof. The five non-control registers are ordered as in Eq.~\eqref{eq:appendix_register_order}.
    }
    \label{fig:C1_circuit_to_tensor}
\end{figure}
The initial operator on these five registers is
\begin{equation}
\eta
:=
\rho_{1}
\otimes
\sigma_{1}
\otimes
\rho
\otimes
\sigma_{2}
\otimes
\rho_{2}.
\label{eq:appendix_eta_definition}
\end{equation}
The control is initialised in
\begin{equation}
\ket{+}\!\bra{+}_{c}
=
\frac{1}{2}
\left(
\ket{0}\!\bra{0}_{c}
+
\ket{0}\!\bra{1}_{c}
+
\ket{1}\!\bra{0}_{c}
+
\ket{1}\!\bra{1}_{c}
\right).
\label{eq:appendix_control_plus_expansion}
\end{equation}
All controlled operations in the protocol are diagonal in the computational basis of the control, while the two interaction unitaries act trivially on $c$. The four control sectors therefore propagate independently. Since the interference matrix is the coefficient of the final $\ket{0}\!\bra{1}_{c}$ sector, it is sufficient to follow only the corresponding component of the initial operator. We define
\begin{equation}
\omega^{(0)}
:=
\frac{1}{2}
\ket{0}\!\bra{1}_{c}
\otimes
\eta.
\label{eq:appendix_omega0_definition}
\end{equation}
Here and below, $\omega^{(k)}$ denotes the propagated
$\ket{0}\!\bra{1}_{c}$ contribution after the $k$-th routing step. These operators are not density operators. Parenthesised superscripts label routing steps, whereas subscripts such as those in $\omega_{ij}^{S}$ label control blocks of the final reduced state. We define the two \texttt{SWAP} operators
\begin{equation}
S_{13}
:=
\operatorname{SWAP}_{1,3},
\qquad
S_{35}
:=
\operatorname{SWAP}_{3,5},
\label{eq:appendix_swap_definitions}
\end{equation}
and the corresponding controlled-\texttt{SWAP} gates
\begin{align}
C_{13}^{(1)}
&:=
\ket{0}\!\bra{0}_{c}\otimes I
+
\ket{1}\!\bra{1}_{c}\otimes S_{13},
\label{eq:appendix_C13_definition}
\\
C_{35}^{(0)}
&:=
\ket{0}\!\bra{0}_{c}\otimes S_{35}
+
\ket{1}\!\bra{1}_{c}\otimes I.
\label{eq:appendix_C35_definition}
\end{align}
Thus, $C_{13}^{(1)}$ applies $S_{13}$ only on the control-$\ket{1}_{c}$ branch, whereas $C_{35}^{(0)}$ applies $S_{35}$ only on the control-$\ket{0}_{c}$ branch. For the two interaction unitaries, we write
\begin{equation}
\widetilde U_{1}
:=
I_{1}\otimes U_{1}\otimes I_{45},
\qquad
\widetilde U_{2}
:=
I_{12}\otimes U_{2}\otimes I_{5},
\label{eq:appendix_embedded_U_definitions}
\end{equation}
so that $U_{1}$ acts on registers $(2,3)$ and $U_{2}$ acts on registers $(3,4)$. With these definitions, the off-diagonal control sector can be propagated through the six routing steps directly.

\subsection{Step-by-step propagation of the off-diagonal control sector}
\label{subsec:appendix_six_steps}

We now propagate the off-diagonal component in
Eq.~\eqref{eq:appendix_omega0_definition} through the circuit. The six circuit steps are
\begin{equation}
C_{13}^{(1)}
\;\longrightarrow\;
\widetilde U_{1}
\;\longrightarrow\;
C_{13}^{(1)}
\;\longrightarrow\;
C_{35}^{(0)}
\;\longrightarrow\;
\widetilde U_{2}
\;\longrightarrow\;
C_{35}^{(0)}.
\label{eq:appendix_six_step_sequence}
\end{equation}
The first three steps route the system through the first interaction, and the last three do the same for the second interaction. We work out the first controlled-\texttt{SWAP} explicitly; the remaining controlled-\texttt{SWAP}s follow from the same algebra.

\begin{figure}[htbp]
    \centering
    \includegraphics[width=0.82\linewidth]{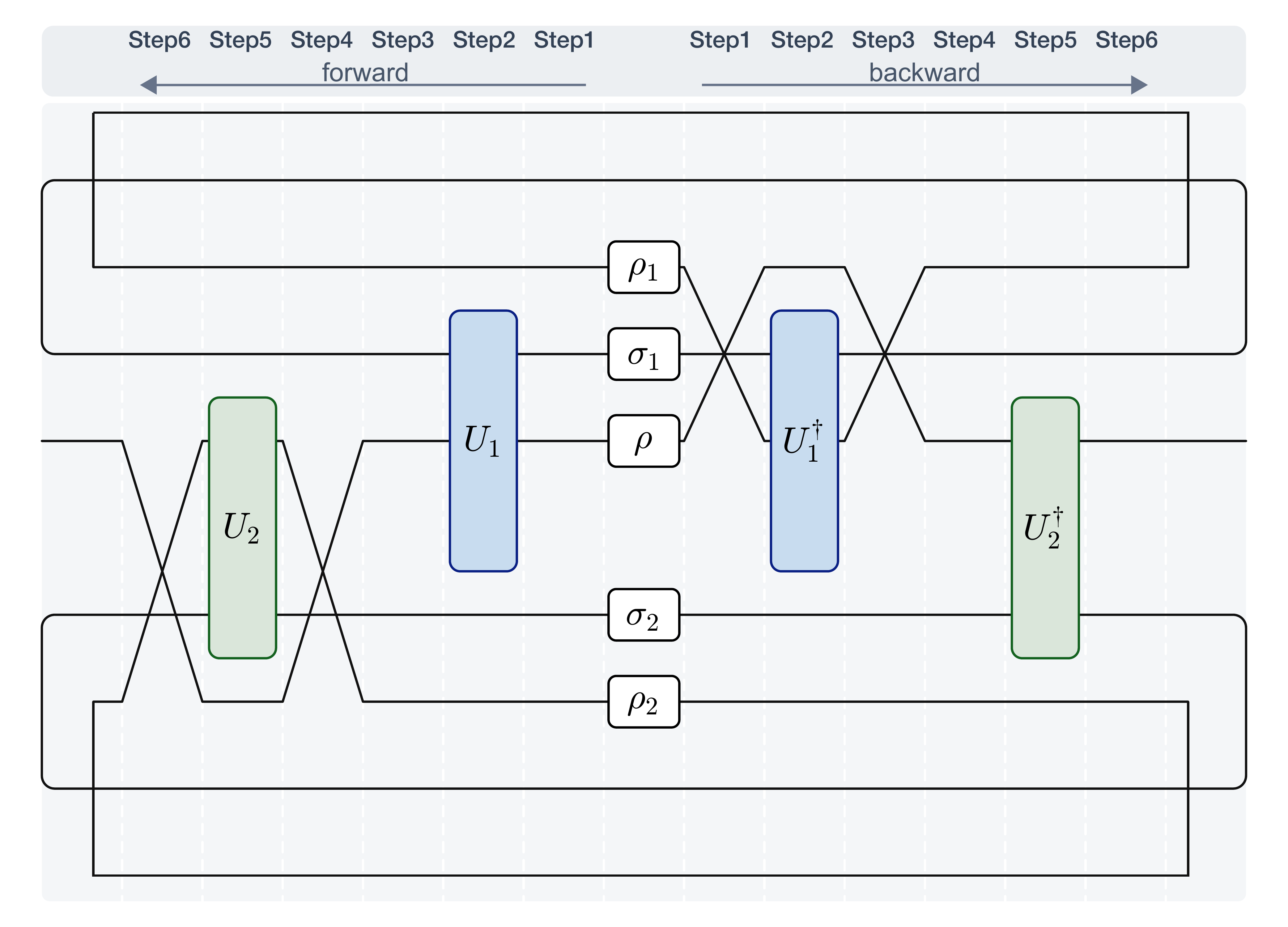}
    \caption{Step-by-step TN representation of the
    $\ket{0}\!\bra{1}_{c}$ contribution.
    The six steps correspond to
    Eqs.~\eqref{eq:appendix_step1_result}-\eqref{eq:appendix_step6_result}.
    The displayed circuit steps are the same for memoryless and finite-memory realisations; their difference enters only through the contraction of hidden indices between the two interactions.
    }
    \label{fig:C2_steps_no_mempry}
\end{figure}

\paragraph{Step 1: routing into the first interaction.}

We first apply the $\ket{1}_{c}$-controlled \texttt{SWAP}
$C_{13}^{(1)}$ between registers $1$ and $3$
\begin{equation}
\omega^{(1)}
=
C_{13}^{(1)}
\omega^{(0)}
C_{13}^{(1)\dagger}.
\label{eq:appendix_step1_start}
\end{equation}
Since $S_{13}=S_{13}^{\dagger}$, the controlled-\texttt{SWAP}
$C_{13}^{(1)}$ is also Hermitian. Substituting
Eqs.~\eqref{eq:appendix_omega0_definition} and
\eqref{eq:appendix_C13_definition} gives
\begin{align}
\omega^{(1)}
={}&
\frac{1}{2}
\Bigl(
\ket{0}\!\bra{0}_{c}\otimes I
+
\ket{1}\!\bra{1}_{c}\otimes S_{13}
\Bigr)
\Bigl(
\ket{0}\!\bra{1}_{c}\otimes\eta
\Bigr)
\nonumber\\
&\hspace{2.5cm}\times
\Bigl(
\ket{0}\!\bra{0}_{c}\otimes I
+
\ket{1}\!\bra{1}_{c}\otimes S_{13}
\Bigr).
\label{eq:appendix_step1_expansion}
\end{align}
The only nonzero control-projector products are
\begin{equation}
\ket{0}\!\bra{0}\,
\ket{0}\!\bra{1}
=
\ket{0}\!\bra{1},
\qquad
\ket{0}\!\bra{1}\,
\ket{1}\!\bra{1}
=
\ket{0}\!\bra{1}.
\label{eq:appendix_step1_projector_identities}
\end{equation}
All other combinations vanish. Therefore
\begin{equation}
\omega^{(1)}
=
\frac{1}{2}
\ket{0}\!\bra{1}_{c}
\otimes
\eta S_{13}.
\label{eq:appendix_step1_result}
\end{equation}
For the $\ket{0}\!\bra{1}_{c}$ sector, the left action corresponds to the control value $0$ on the ket branch, while the right action corresponds to the control value $1$ on the bra branch. We refer to these as the forward and backward branches, respectively. Thus, the first controlled-\texttt{SWAP} acts trivially on the forward branch and applies $S_{13}$ on the backward branch, where it exchanges the roles of $\rho_{1}$ and $\rho$. The remaining controlled-\texttt{SWAP}s are evaluated in the same way, so below we state their action directly.

\paragraph{Step 2: first interaction $U_{1}$.}

The first interaction acts on registers $(2,3)$ on both control branches. Hence
\begin{align}
\omega^{(2)}
&=
\widetilde U_{1}
\omega^{(1)}
\widetilde U_{1}^{\dagger}
\nonumber\\
&=
\frac{1}{2}
\ket{0}\!\bra{1}_{c}
\otimes
\widetilde U_{1}
\eta S_{13}
\widetilde U_{1}^{\dagger}.
\label{eq:appendix_step2_result}
\end{align}
On the forward branch, registers $(2,3)$ contain
$(\sigma_{1},\rho)$, so $U_{1}$ acts on the system input $\rho$.
On the backward branch, the preceding \texttt{SWAP} routes
$\rho_{1}$ into register $3$, so $U_{1}^{\dagger}$ acts on
$(\sigma_{1},\rho_{1})$.

\paragraph{Step 3: routing out of the first interaction.}

We apply $C_{13}^{(1)}$ a second time. As in Step~1, it acts trivially from the left and as $S_{13}$ from the right in the
$\ket{0}\!\bra{1}_{c}$ sector. Therefore
\begin{align}
\omega^{(3)}
&=
C_{13}^{(1)}
\omega^{(2)}
C_{13}^{(1)\dagger}
\nonumber\\
&=
\frac{1}{2}
\ket{0}\!\bra{1}_{c}
\otimes
\widetilde U_{1}
\eta S_{13}
\widetilde U_{1}^{\dagger}
S_{13}.
\label{eq:appendix_step3_result}
\end{align}
This second controlled-\texttt{SWAP} restores the visible register ordering after the first interaction while preserving the coherent distinction between the two routing histories.

\paragraph{Step 4: routing into the second interaction.}

We next apply the $\ket{0}_{c}$-controlled \texttt{SWAP}
$C_{35}^{(0)}$ between registers $3$ and $5$. In the
$\ket{0}\!\bra{1}_{c}$ sector, the control value is $0$ on the forward branch and $1$ on the backward branch. Thus $S_{35}$ acts from the left, while the backward branch is unchanged
\begin{align}
\omega^{(4)}
&=
C_{35}^{(0)}
\omega^{(3)}
C_{35}^{(0)\dagger}
\nonumber\\
&=
\frac{1}{2}
\ket{0}\!\bra{1}_{c}
\otimes
S_{35}
\widetilde U_{1}
\eta S_{13}
\widetilde U_{1}^{\dagger}
S_{13}.
\label{eq:appendix_step4_result}
\end{align}
Immediately before the second interaction, register $3$ therefore contains $\rho_{2}$ on the forward branch and $\rho$ on the backward branch.

\paragraph{Step 5: second interaction $U_{2}$.}

The second interaction acts on registers $(3,4)$ on both branches
\begin{align}
\omega^{(5)}
&=
\widetilde U_{2}
\omega^{(4)}
\widetilde U_{2}^{\dagger}
\nonumber\\
&=
\frac{1}{2}
\ket{0}\!\bra{1}_{c}
\otimes
\widetilde U_{2}
S_{35}
\widetilde U_{1}
\eta S_{13}
\widetilde U_{1}^{\dagger}
S_{13}
\widetilde U_{2}^{\dagger}.
\label{eq:appendix_step5_result}
\end{align}
On the forward branch, $U_{2}$ acts on
$(\rho_{2},\sigma_{2})$.
On the backward branch, $U_{2}^{\dagger}$ acts on
$(\rho,\sigma_{2})$.

\paragraph{Step 6: routing out of the second interaction.}

Finally, we apply $C_{35}^{(0)}$ once more. It again acts as
$S_{35}$ from the left and as the identity from the right, giving
\begin{align}
\omega^{(6)}
&=
C_{35}^{(0)}
\omega^{(5)}
C_{35}^{(0)\dagger}
\nonumber\\
&=
\frac{1}{2}
\ket{0}\!\bra{1}_{c}
\otimes
S_{35}
\widetilde U_{2}
S_{35}
\widetilde U_{1}
\eta S_{13}
\widetilde U_{1}^{\dagger}
S_{13}
\widetilde U_{2}^{\dagger}.
\label{eq:appendix_step6_result}
\end{align}
The final controlled-\texttt{SWAP} returns the retained system output to register $3$.

\paragraph{Final partial trace.}

The coefficient of $\ket{0}\!\bra{1}_{c}$ in
Eq.~\eqref{eq:appendix_step6_result} is the full off-diagonal operator before the auxiliary registers are discarded. Tracing out all registers except the retained system therefore gives
\begin{equation}
M
=
\frac{1}{2}
\operatorname{Tr}_{1,2,4,5}
\left[
S_{35}
\widetilde U_{2}
S_{35}
\widetilde U_{1}
\eta S_{13}
\widetilde U_{1}^{\dagger}
S_{13}
\widetilde U_{2}^{\dagger}
\right].
\label{eq:appendix_M_full_contraction}
\end{equation}
In the five-register notation used here,
$\operatorname{Tr}_{1,2,4,5}$ discards all displayed registers other than register $3$. More generally, the partial trace also includes any local environmental outputs that are not retained as part of the system output. The overall factor $1/2$ originates from the initial
$\ket{0}\!\bra{1}_{c}$ coherence of the control state
$\ket{+}\!\bra{+}_{c}$. This scalar factor does not affect any of the structural arguments below. Eq.~\eqref{eq:appendix_M_full_contraction} is the exact system-only interference matrix whose TN contraction pattern we analyse next.

\subsection{Memoryless contraction pattern}
\label{subsec:appendix_markovian_contraction}

We first consider the memoryless case, which provides the simplest example of the matrix-rank cut used below. In a memoryless realisation, no hidden environmental degree of freedom leaving the first interaction is passed to the second. Consequently, all hidden indices associated with each interaction are contracted locally, and no hidden bond connects the two interactions. After tracing out the discarded registers and rearranging the TN as shown in Figs.~\ref{fig:C3_tensor_no_memory} and \ref{fig:C4_tensor_simple}, the network is connected across the cut only through the system input $\rho$. We therefore group all tensors on the left and right sides of the cut into two contracted blocks, denoted by $A$ and $B$, respectively. This gives
\begin{equation}
M
=
\frac{1}{2}
A\rho B.
\label{eq:appendix_markovian_factorization}
\end{equation}
The labels $A$ and $B$ simply denote the complete contractions on the two sides of the cut. After the TN is rearranged, tensors from either interaction may be absorbed into either block. The essential point is that the only connection across the cut is the system input $\rho$: no hidden environmental bond crosses it.

For this system-only warm-up, we take the system input to be pure
\begin{equation}
\rho
=
\ket{\psi}\!\bra{\psi}.
\end{equation}
Eq.~\eqref{eq:appendix_markovian_factorization} then becomes
\begin{align}
M
&=
\frac{1}{2}
A
\ket{\psi}\!\bra{\psi}
B
\nonumber\\
&=
\frac{1}{2}
\bigl(
A\ket{\psi}
\bigr)
\bigl(
\bra{\psi}B
\bigr).
\label{eq:appendix_markovian_outer_product}
\end{align}
Thus $M$ is a single outer product. If both vectors in
Eq.~\eqref{eq:appendix_markovian_outer_product} are nonzero, this outer product has rank one; if either vector vanishes, then $M=0$ and its rank is zero. 
In either case
\begin{equation}
\operatorname{rank}(M)
\leq
1.
\label{eq:appendix_markovian_rank_bound}
\end{equation}

\begin{figure}[htbp]
    \centering
    \includegraphics[width=0.8\linewidth]{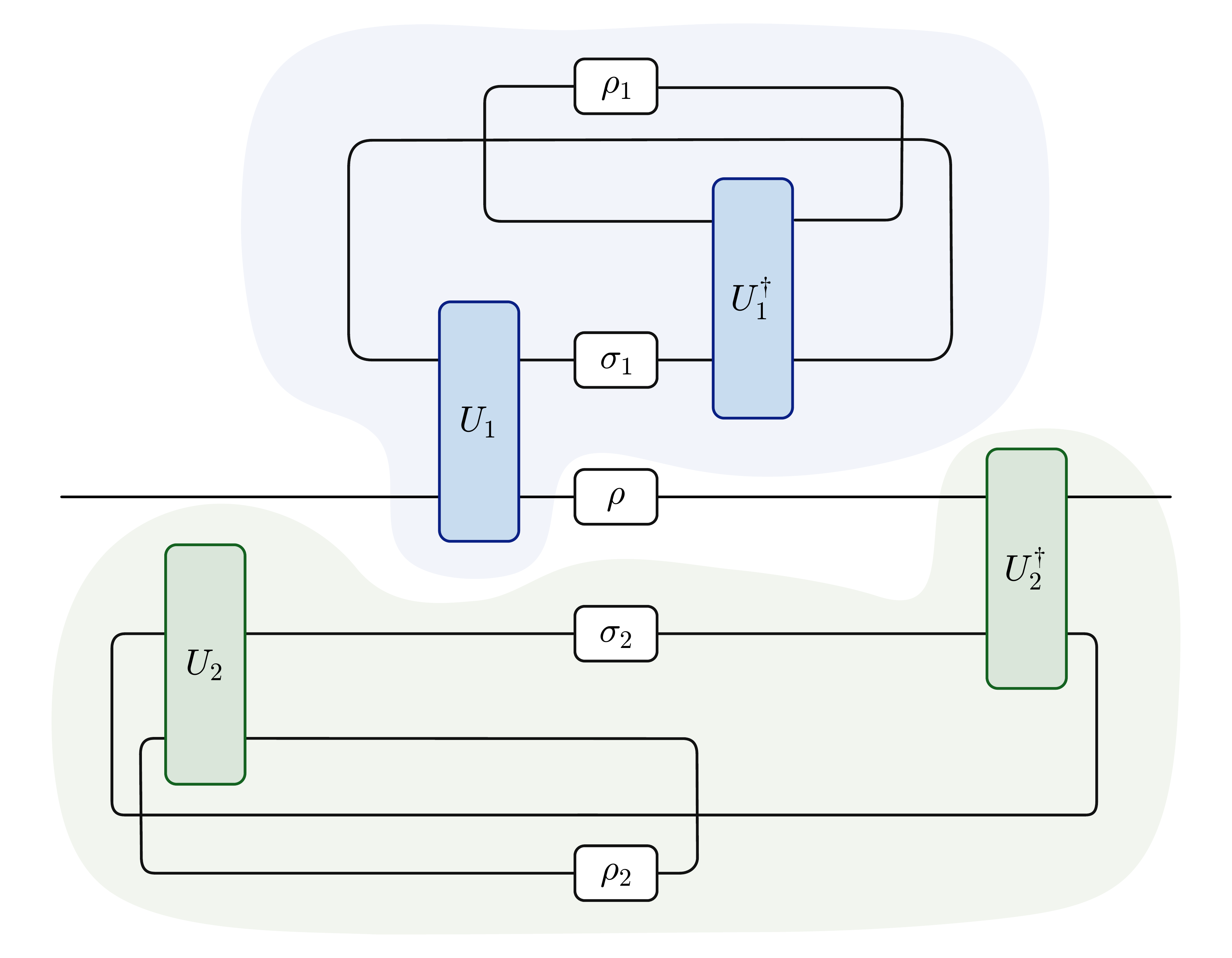}
    \caption{Memoryless contraction pattern after tracing out the discarded registers and rearranging the TN. No hidden bond connects the two interactions.
    }
    \label{fig:C3_tensor_no_memory}
\end{figure}

\begin{figure}[htbp]
    \centering
    \includegraphics[width=0.44\linewidth]{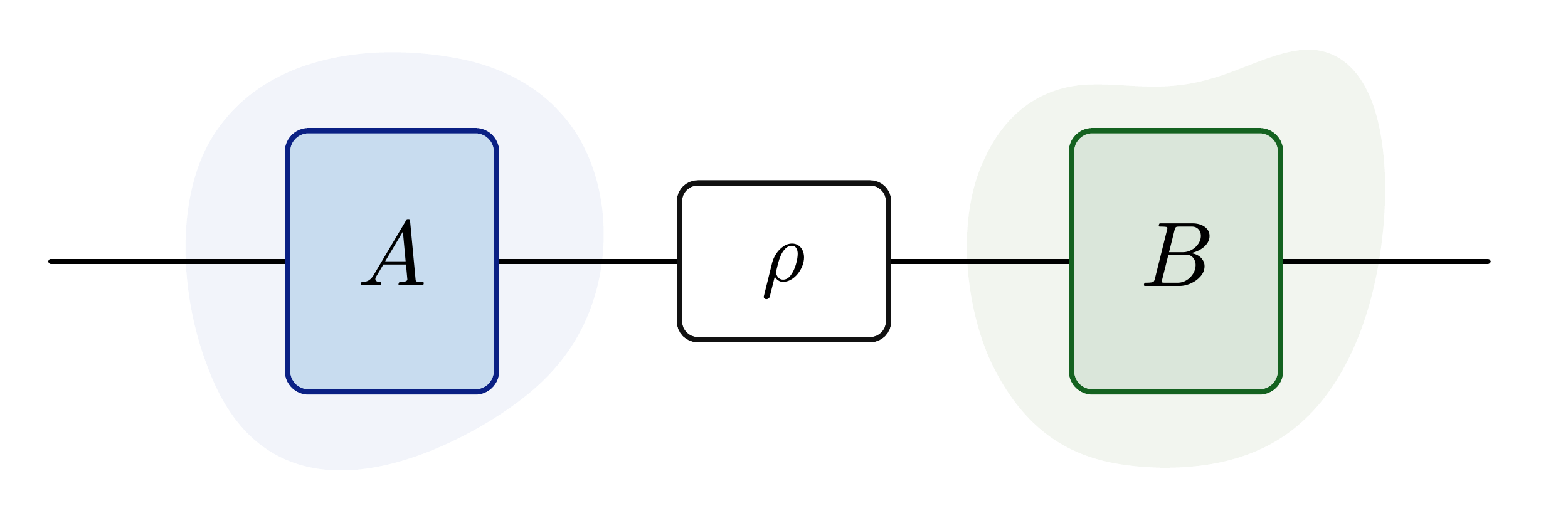}
    \caption{Simplified memoryless network. The two contracted blocks are connected only through the system input $\rho$. For a pure input, $\rho=\ket{\psi}\!\bra{\psi}$ factorises across the cut, so the resulting matrix $M$ is a single outer product.
    }
    \label{fig:C4_tensor_simple}
\end{figure}

The finite-memory case differs only in that additional memory bonds connect the two contracted blocks.

\subsection{Finite-memory contraction pattern and proof of the memory bound}
\label{subsec:appendix_finite_memory_contraction}

We now allow an effective memory carrier to connect the two interactions. Let $\mathcal H_{m}$ denote its Hilbert space, with
\begin{equation}
d_{\mathrm m}
:=
\dim\mathcal H_{m}.
\label{eq:appendix_dm_definition}
\end{equation}
At the circuit level, this is a single physical memory carrier passed from the first interaction to the second. In the operator TN, however, the ket and bra copies of this carrier appear as two separate bonds: one on the forward branch and one on the backward branch. Each bond has dimension $d_{\mathrm m}$.

The six visible circuit steps derived in Sec.~\ref{subsec:appendix_six_steps} are unchanged. Finite memory changes only the hidden contraction between the two interactions. Fig.~\ref{fig:C5_steps_with_mempry} shows the same circuit propagation with the two memory bonds made explicit.

\begin{figure}[htbp]
    \centering
    \includegraphics[width=0.82\linewidth]{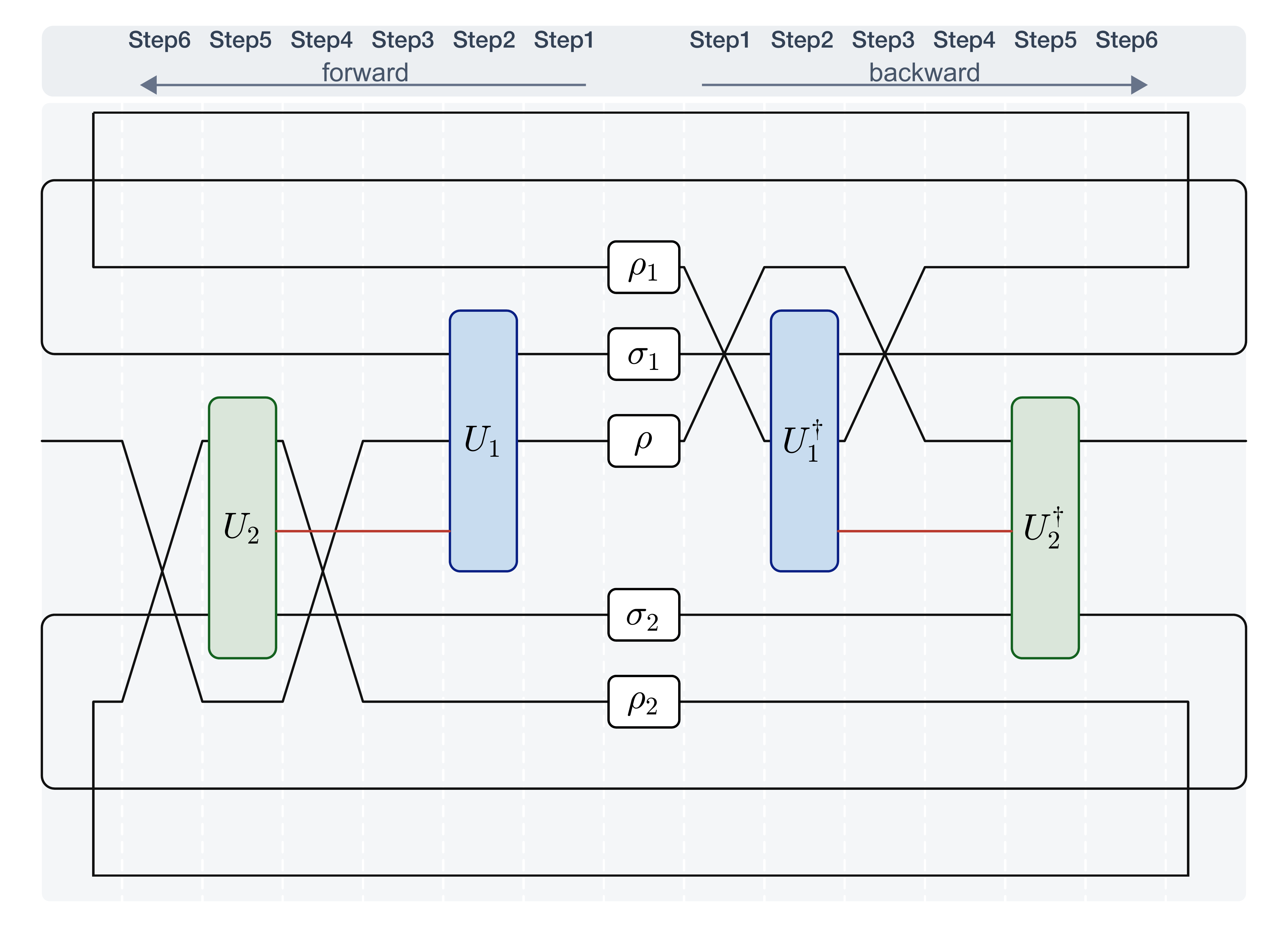}
    \caption{The same six circuit steps in a finite-memory realisation. The red wire represents the physical memory carrier in $\mathcal H_m$. In the operator TN, its ket and bra indices appear as separate forward and backward memory bonds, each of dimension $d_{\mathrm m}$.
    }
    \label{fig:C5_steps_with_mempry}
\end{figure}

After tracing out the discarded registers and rearranging the TN as in the memoryless case, the two contracted blocks are now connected not only through the system input $\rho$, but also through the forward and backward memory bonds. This structure is shown in Figs.~\ref{fig:C6_tensor_with_memory} and \ref{fig:C7_tensor_simple}. Relative to the memoryless case, these two memory bonds are the only additional connections across the matrix-rank cut.

\begin{figure}[htbp]
    \centering
    \includegraphics[width=0.8\linewidth]{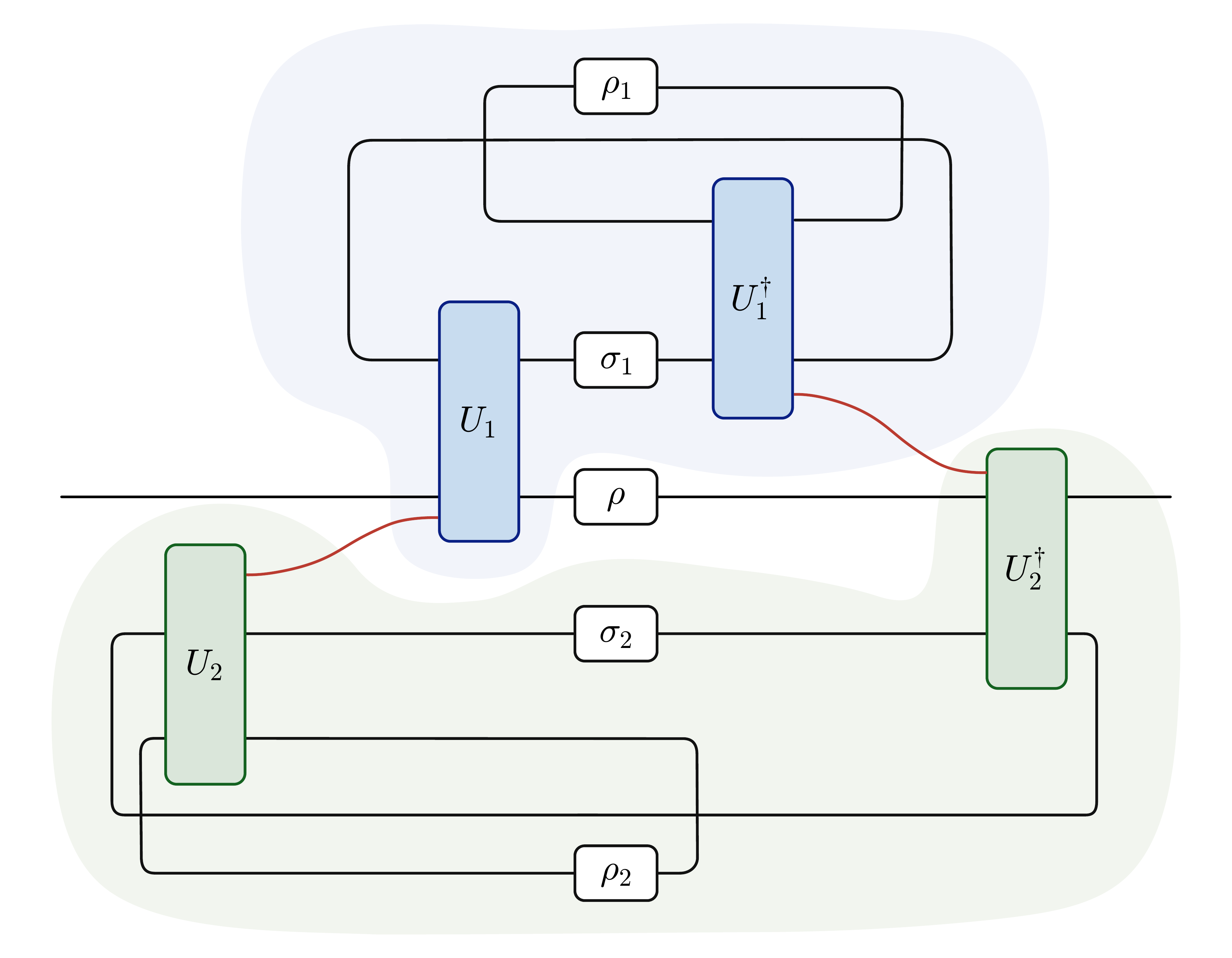}
    \caption{Finite-memory contraction pattern after tracing out the discarded registers and rearranging the TN. Relative to the memoryless case, the two contracted blocks are additionally connected by the forward and backward memory bonds.
    }
    \label{fig:C6_tensor_with_memory}
\end{figure}

\begin{figure}[htbp]
    \centering
    \includegraphics[width=0.44\linewidth]{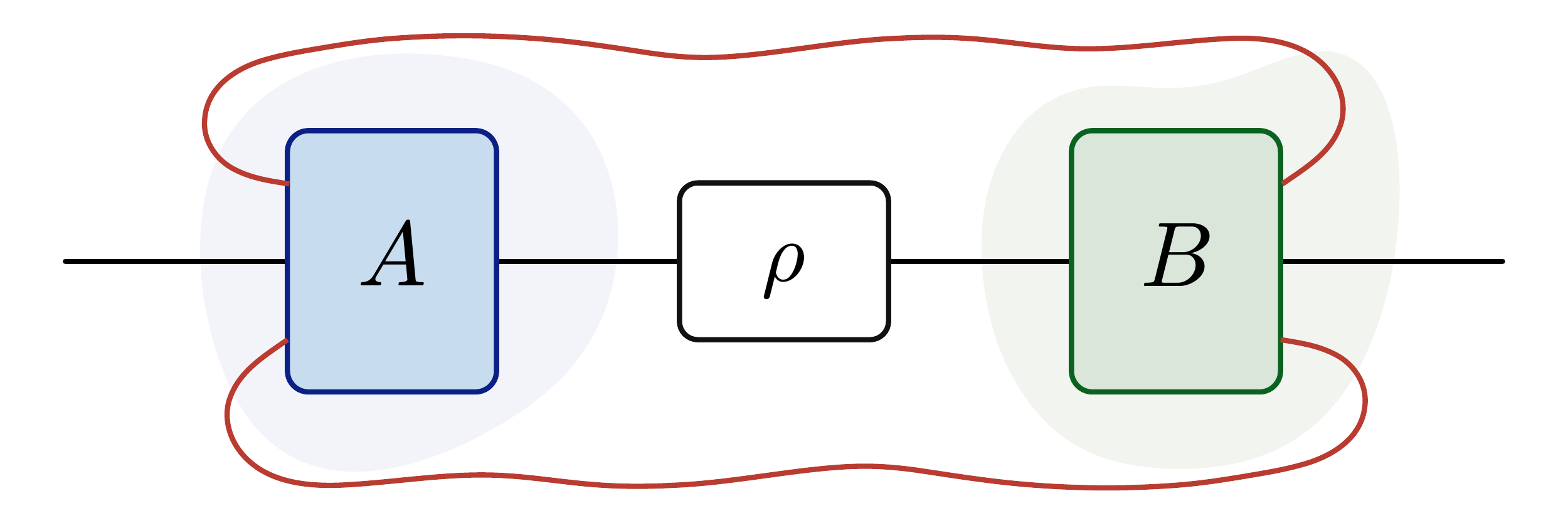}
    \caption{Simplified finite-memory network. The matrix-rank cut crosses the forward and backward memory bonds in addition to the pure system input $\rho$. Each memory bond has dimension $d_{\mathrm m}$.
    }
    \label{fig:C7_tensor_simple}
\end{figure}

Cut the two memory bonds and label their indices by
\begin{equation}
\alpha,\beta
=
1,\ldots,d_{\mathrm m},
\end{equation}
where $\alpha$ labels the forward memory bond and $\beta$ the backward memory bond. For a fixed pair $(\alpha,\beta)$, contracting all remaining tensors on the two sides of the cut gives two blocks, which we denote by $A_{\alpha\beta}$ and $B_{\alpha\beta}$. Summing over the memory indices therefore gives
\begin{equation}
M
=
\frac{1}{2}
\sum_{\alpha=1}^{d_{\mathrm m}}
\sum_{\beta=1}^{d_{\mathrm m}}
A_{\alpha\beta}\,
\rho\,
B_{\alpha\beta}.
\label{eq:appendix_memory_cut_decomposition}
\end{equation}
This is the finite-memory analogue of
Eq.~\eqref{eq:appendix_markovian_factorization}. In the memoryless case there is a single term, whereas here the pair $(\alpha,\beta)$ can take at most $d_{\mathrm m}^{2}$ values. For the pure system input used in this system-only argument
\begin{equation}
\rho
=
\ket{\psi}\!\bra{\psi},
\end{equation}
define
\begin{equation}
\ket{a_{\alpha\beta}}
:=
A_{\alpha\beta}\ket{\psi},
\qquad
\bra{b_{\alpha\beta}}
:=
\bra{\psi}B_{\alpha\beta}.
\label{eq:appendix_ab_vectors}
\end{equation}
Eq.~\eqref{eq:appendix_memory_cut_decomposition} then becomes
\begin{equation}
M
=
\frac{1}{2}
\sum_{\alpha=1}^{d_{\mathrm m}}
\sum_{\beta=1}^{d_{\mathrm m}}
\ket{a_{\alpha\beta}}
\!\bra{b_{\alpha\beta}}.
\label{eq:appendix_memory_outer_products}
\end{equation}
For each fixed pair $(\alpha,\beta)$, the corresponding term is a single outer product and therefore has rank at most one. Since there are at most $d_{\mathrm m}^{2}$ such pairs, subadditivity of matrix rank gives
\begin{align}
\operatorname{rank}(M)
&\leq
\sum_{\alpha=1}^{d_{\mathrm m}}
\sum_{\beta=1}^{d_{\mathrm m}}
\operatorname{rank}
\left(
\ket{a_{\alpha\beta}}
\!\bra{b_{\alpha\beta}}
\right)
\nonumber\\
&\leq
d_{\mathrm m}^{2}.
\label{eq:appendix_rank_le_dm2}
\end{align}
Hence
\begin{equation}
d_{\mathrm m}
\geq
\left\lceil
\sqrt{\operatorname{rank}(M)}
\right\rceil .
\label{eq:appendix_memory_lower_bound}
\end{equation}
This is the system-only form of the memory-dimension bound. The reference-assisted case required for Theorem~\ref{thm:memory_bound} is treated in the next subsection. For completeness, the same argument extends to a mixed system input of rank $r$. Writing $\rho$ as a sum of $r$ rank-one terms gives at most $r d_{\mathrm m}^{2}$ outer products, and therefore
\begin{equation}
\operatorname{rank}(M)
\leq
r\,d_{\mathrm m}^{2}.
\end{equation}
The pure-input result above corresponds to $r=1$.

\subsection{Reference-assisted memory bound and rank ceiling}
\label{subsec:appendix_reference_assisted}

We now restore the isolated reference $R$ and return to the interference matrix $M_{RS}$ used in the main text. Let
\begin{equation}
\ket{\Psi}_{RS}
\in
\mathcal H_R\otimes\mathcal H_S,
\qquad
\dim\mathcal H_S=d_S,
\label{eq:appendix_reference_spaces}
\end{equation}
be a pure reference-system input, with reduced states
\begin{equation}
\rho_S
:=
\operatorname{Tr}_{R}
\ket{\Psi}\!\bra{\Psi}_{RS},
\qquad
\rho_R
:=
\operatorname{Tr}_{S}
\ket{\Psi}\!\bra{\Psi}_{RS}.
\label{eq:appendix_reference_reduced_states}
\end{equation}
Because $\ket{\Psi}_{RS}$ is pure, the reduced states $\rho_S$ and $\rho_R$ have the same rank. Their common rank
\begin{equation}
s
:=
\operatorname{rank}(\rho_S)
=
\operatorname{rank}(\rho_R)
\label{eq:appendix_reference_schmidt_rank}
\end{equation}
is the Schmidt rank of $\ket{\Psi}_{RS}$. Only $S$ is routed through the pseudo-control circuit. The reference $R$ remains isolated throughout and is retained together with the final system output. The corresponding interference matrix is
\begin{equation}
M_{RS}
:=
\omega_{01}^{RS}.
\end{equation}
The reference changes the retained output space but does not introduce an additional bond across the memory cut. As a result, the memory-dimension bound derived above is unchanged, while the larger retained space can raise the maximum rank accessible to the protocol. We establish these two statements separately below.

\begin{figure}[htbp]
    \centering
    \includegraphics[width=0.82\linewidth]
    {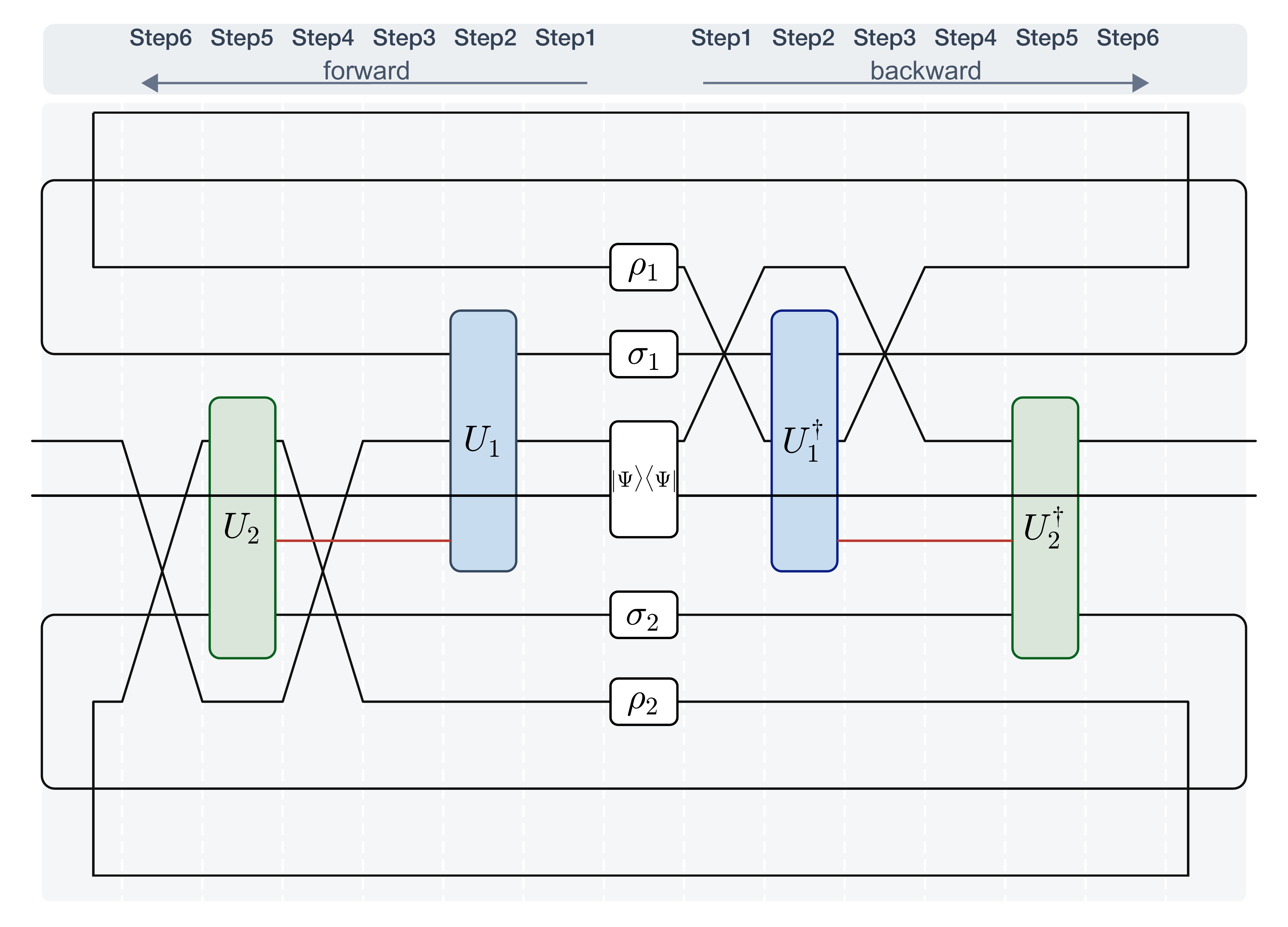}
    \caption{Step-by-step TN representation of the reference-assisted protocol. Only $S$ is routed through the pseudo-control circuit, while $R$ remains isolated and is retained for the final joint tomography.
    }
    \label{fig:appendix_reference_steps}
\end{figure}

\paragraph{Memory-dimension bound.}

The six visible circuit steps are unchanged from
Sec.~\ref{subsec:appendix_six_steps}. Relative to the system-only decomposition in Eq.~\eqref{eq:appendix_memory_cut_decomposition}, the pure system input and the two contracted blocks are replaced by
\begin{equation}
\ket{\psi}\!\bra{\psi}_{S}
\longrightarrow
\ket{\Psi}\!\bra{\Psi}_{RS},
\qquad
A_{\alpha\beta}
\longrightarrow
I_R\otimes A_{\alpha\beta},
\qquad
B_{\alpha\beta}
\longrightarrow
I_R\otimes B_{\alpha\beta}.
\label{eq:appendix_reference_replacements}
\end{equation}
Cutting the forward and backward memory bonds therefore gives
\begin{equation}
M_{RS}
=
\frac{1}{2}
\sum_{\alpha=1}^{d_{\mathrm m}}
\sum_{\beta=1}^{d_{\mathrm m}}
\left(
I_R\otimes A_{\alpha\beta}
\right)
\ket{\Psi}\!\bra{\Psi}_{RS}
\left(
I_R\otimes B_{\alpha\beta}
\right).
\label{eq:appendix_reference_memory_decomposition}
\end{equation}
Although the reduced system state $\rho_S$ can be mixed, the operator appearing in Eq.~\eqref{eq:appendix_reference_memory_decomposition} is the globally pure projector
$\ket{\Psi}\!\bra{\Psi}_{RS}$. For each fixed pair $(\alpha,\beta)$, define
\begin{equation}
\ket{a_{\alpha\beta}^{RS}}
:=
\left(
I_R\otimes A_{\alpha\beta}
\right)
\ket{\Psi}_{RS},
\qquad
\bra{b_{\alpha\beta}^{RS}}
:=
{}_{RS}\!\bra{\Psi}
\left(
I_R\otimes B_{\alpha\beta}
\right).
\label{eq:appendix_reference_ab_vectors}
\end{equation}
Then
\begin{equation}
M_{RS}
=
\frac{1}{2}
\sum_{\alpha=1}^{d_{\mathrm m}}
\sum_{\beta=1}^{d_{\mathrm m}}
\ket{a_{\alpha\beta}^{RS}}
\!\bra{b_{\alpha\beta}^{RS}}.
\label{eq:appendix_reference_rank_one_factorization}
\end{equation}
Thus, for each fixed pair $(\alpha,\beta)$, the corresponding contribution is a single outer product on the joint output space $R\otimes S$ and has rank at most one. Since the two memory indices each take at most $d_{\mathrm m}$ values, there are at most $d_{\mathrm m}^{2}$ such terms. Therefore
\begin{equation}
\operatorname{rank}(M_{RS})
\leq
d_{\mathrm m}^{2}.
\label{eq:appendix_reference_memory_bound}
\end{equation}
Equivalently
\begin{equation}
d_{\mathrm m}
\geq
\left\lceil
\sqrt{\operatorname{rank}(M_{RS})}
\right\rceil .
\label{eq:appendix_reference_memory_lower_bound}
\end{equation}
This proves Theorem~\ref{thm:memory_bound}. The isolated reference does not change this counting. It is retained throughout the circuit and does not create an additional bond connecting the two interactions across the memory cut. The only bonds added across this cut by finite memory are therefore the same forward and backward memory bonds already counted above.

\begin{figure}[htbp]
    \centering
    \includegraphics[width=0.8\linewidth]
    {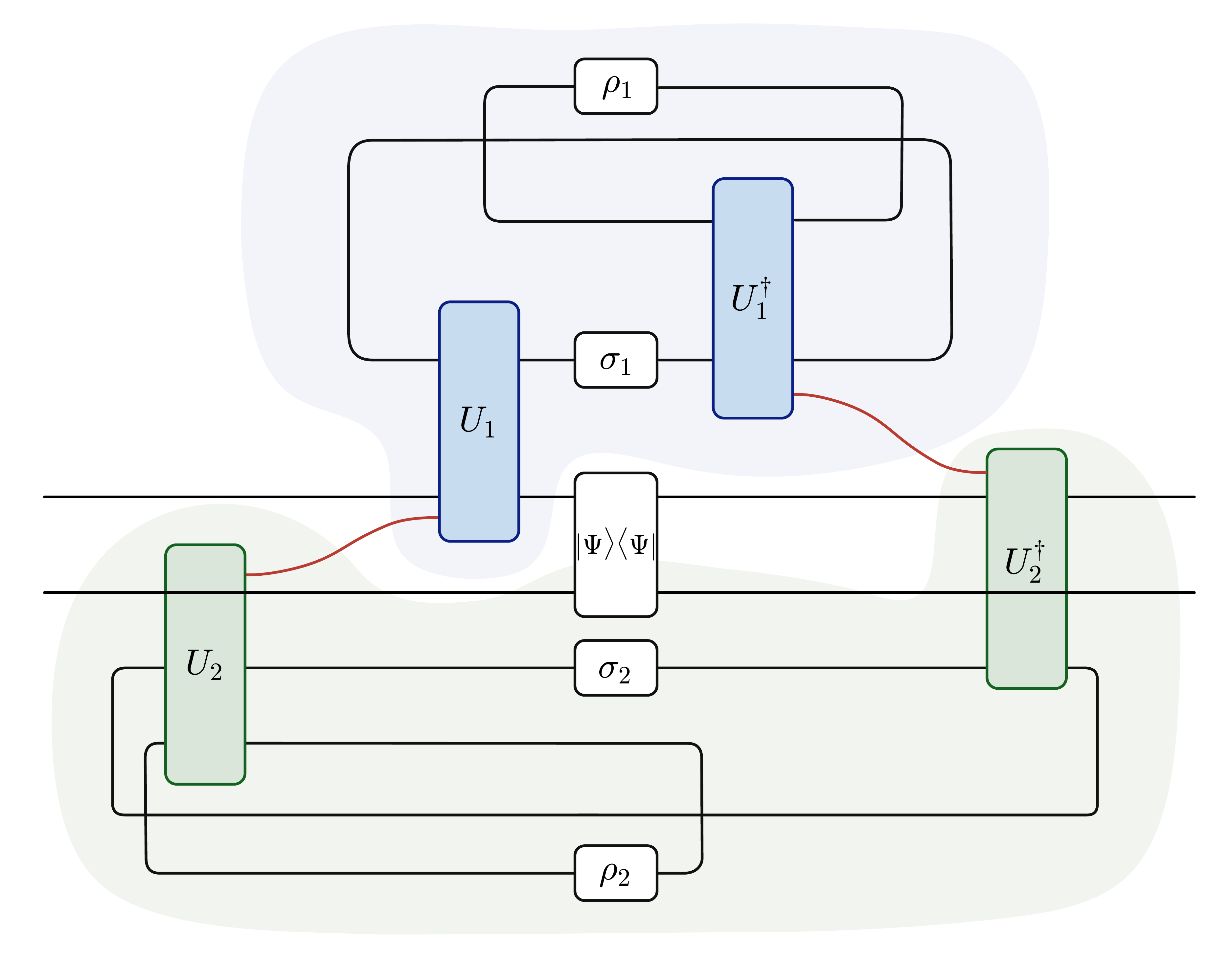}
    \caption{TN contraction pattern for the reference-assisted protocol. The global input operator $\ket{\Psi}\!\bra{\Psi}_{RS}$ is rank one, while the two red bonds are the ket and bra indices of the memory carrier connecting the two interactions. The isolated reference does not add a bond across the memory cut.
    }
    \label{fig:appendix_reference_tensor}
\end{figure}

\paragraph{Reference-assisted rank ceiling.}

The rank bound in Eq.~\eqref{eq:appendix_reference_memory_bound} is set by the physical memory connecting the two interactions. A separate limitation comes from the support of the retained $RS$ output. To see this, write the Schmidt decomposition of the input as
\begin{equation}
\ket{\Psi}_{RS}
=
\sum_{j=1}^{s}
\sqrt{\lambda_j}\,
\ket{j}_{R}
\ket{\phi_j}_{S},
\qquad
\lambda_j>0,
\qquad
\sum_{j=1}^{s}\lambda_j=1,
\label{eq:appendix_reference_schmidt_decomposition}
\end{equation}
where
$\{\ket{j}_{R}\}_{j=1}^{s}$
and
$\{\ket{\phi_j}_{S}\}_{j=1}^{s}$
are orthonormal sets. Let
\begin{equation}
P_R
:=
\sum_{j=1}^{s}
\ket{j}\!\bra{j}_{R}
\label{eq:appendix_reference_support_projector}
\end{equation}
be the projector onto the $s$-dimensional Schmidt support of $\rho_R$. Because no operation acts on $R$, the reference component of the output remains within this Schmidt support. Hence
\begin{equation}
M_{RS}
=
\left(
P_R\otimes I_S
\right)
M_{RS}
\left(
P_R\otimes I_S
\right).
\label{eq:appendix_reference_support_projection}
\end{equation}
The interference matrix is therefore supported on
\begin{equation}
\operatorname{supp}(\rho_R)
\otimes
\mathcal H_S,
\end{equation}
whose dimension is $s d_S$. It follows that
\begin{equation}
\operatorname{rank}(M_{RS})
\leq
s d_S.
\label{eq:appendix_reference_rank_ceiling}
\end{equation}

\begin{figure}[htbp]
    \centering
    \includegraphics[width=0.44\linewidth]
    {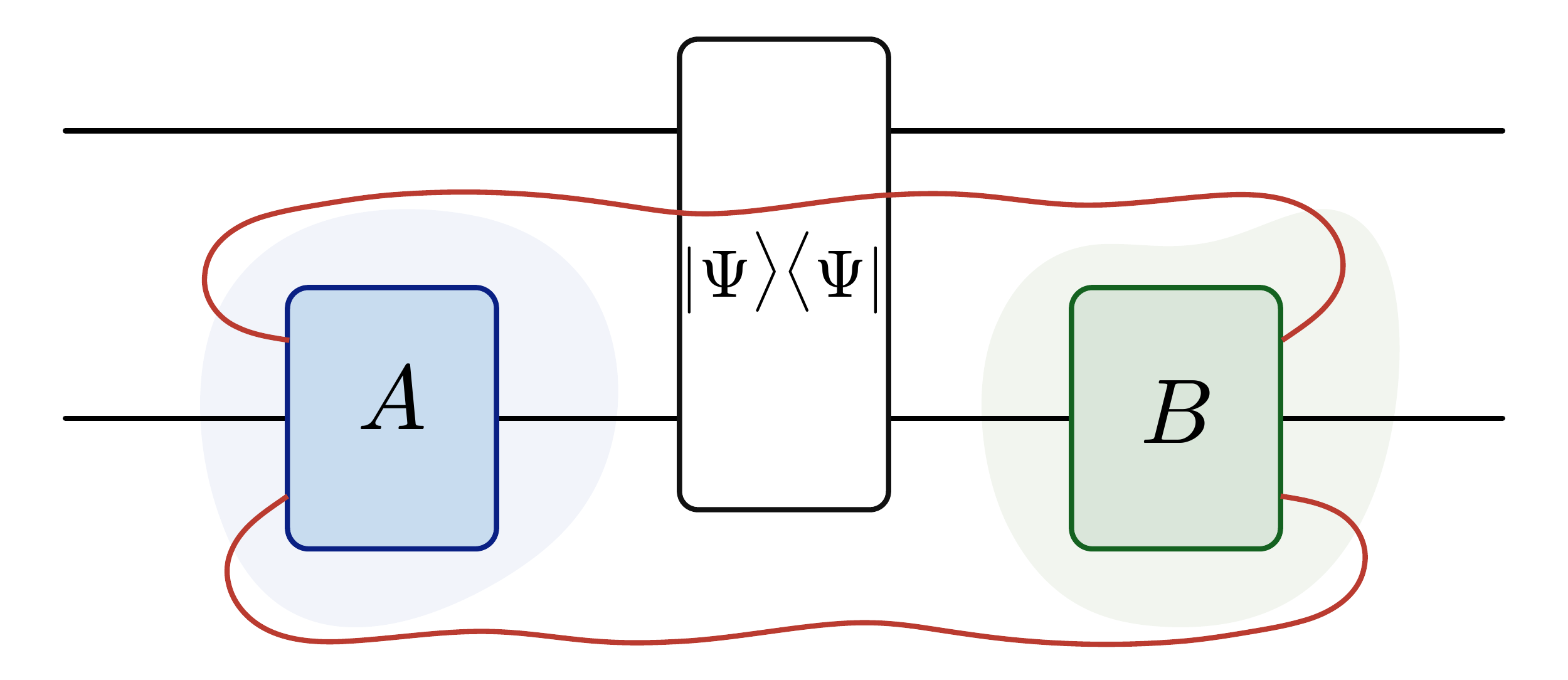}
    \caption{Simplified reference-assisted TN. Fixing the two memory indices leaves a single outer product on the joint output space $RS$. The reference output is restricted to its $s$-dimensional Schmidt support, so the retained joint support has dimension $s d_S$.
    }
    \label{fig:appendix_reference_simple}
\end{figure}
Without a reference, the interference matrix acts only on the $d_S$-dimensional system output and therefore has rank at most $d_S$. A reference of Schmidt rank $s$ enlarges the retained support to dimension $s d_S$, thereby raising the rank ceiling without changing the memory-dimension bound. For a full-Schmidt-rank input $s=d_S$, and
Eq.~\eqref{eq:appendix_reference_rank_ceiling} becomes
\begin{equation}
\operatorname{rank}(M_{RS})
\leq
d_S^{2}.
\label{eq:appendix_reference_full_schmidt_ceiling}
\end{equation}
A maximally entangled state is a natural choice in this case, as used in the numerical analysis, although equal Schmidt coefficients are not required for the support ceiling in Eq.~\eqref{eq:appendix_reference_full_schmidt_ceiling}. The two rank bounds above have different origins. The memory bound
$\operatorname{rank}(M_{RS})\leq d_{\mathrm m}^{2}$
is set by the dimension of the physical memory connecting the two interactions, whereas
$\operatorname{rank}(M_{RS})\leq s d_S$
is set by the support available in the retained reference-system output.

\section{Extensions and robustness of the memory bound}
\label{app:extensions-and-robustness}

The proof above uses a simplified setting to make the TN structure underlying the memory-dimension bound explicit. We now consider two extensions of this setting---open system-environment interactions and classical memory---and then examine the effect of residual system-memory dynamics during the interval. For clarity, the TN diagrams in this section suppress the isolated reference $R$ and show the corresponding system-only interference matrix $M$. Since $R$ does not participate in the two interactions, restoring it does not change any of the memory cuts considered below. The same cut arguments therefore extend directly to the reference-assisted interference matrix $M_{RS}$, as established in Sec.~\ref{subsec:appendix_reference_assisted}. We state the final memory bounds in terms of $M_{RS}$.

\subsection{System-environment interactions as quantum channels}
\label{app:unitary_to_channels}

The TN proof above represented the two interactions by unitaries $U_1$ and $U_2$. We now allow either or both interactions to be open-system channels. Physically, this includes local dissipation during a system-environment interaction, where the degrees of freedom involved in that interaction may also couple to additional local environmental modes.

\begin{figure}[htbp]
    \centering
    \includegraphics[width=0.6\linewidth]
    {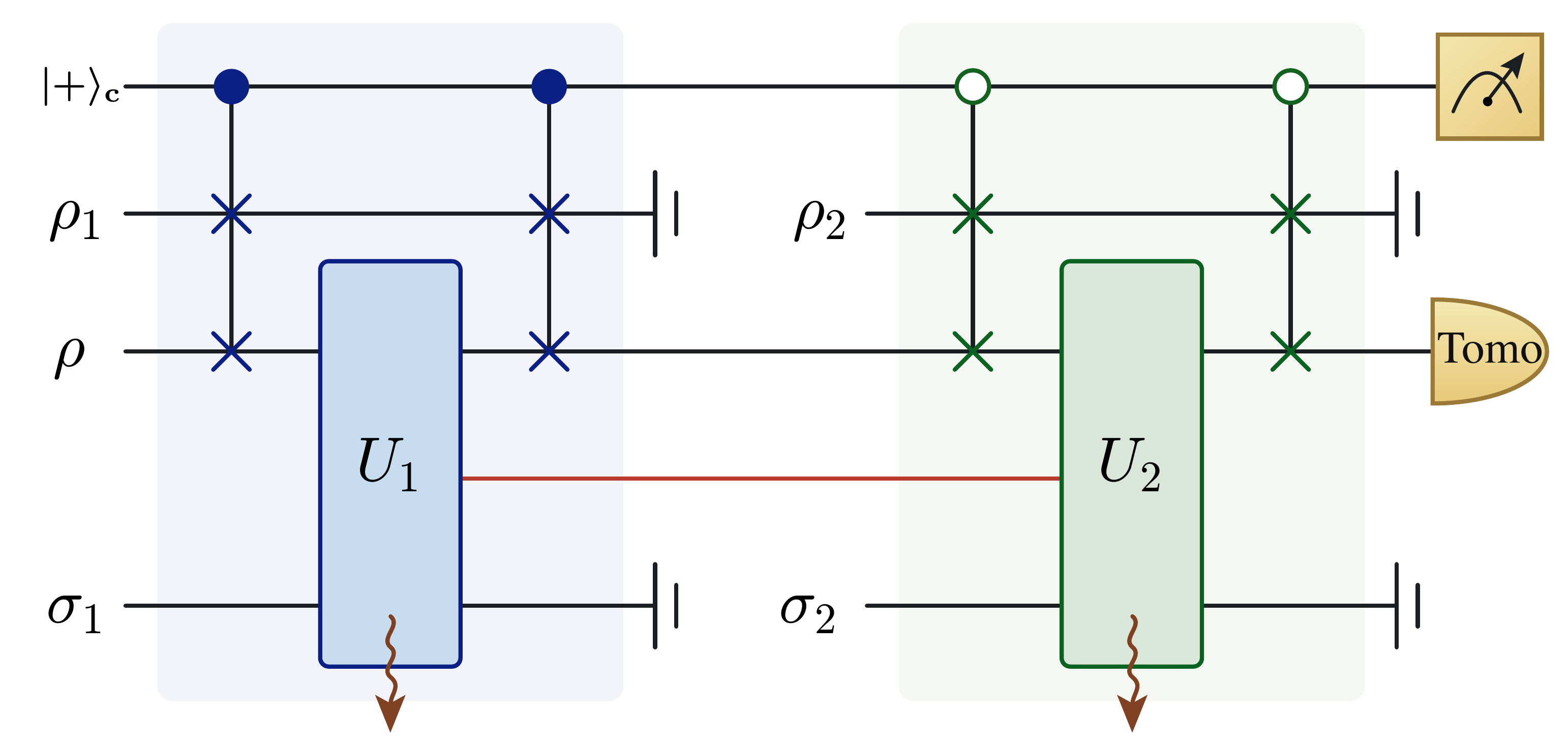}
    \caption{Circuit in which the two system-environment interactions include local dissipation and are therefore described by open-system channels. The isolated reference \(R\) is suppressed for clarity.
    }
    \label{fig:app_circuit_U1U2_dissipation}
\end{figure}

Each open interaction admits a Stinespring representation. For an input state $\varrho$
\begin{equation}
\mathcal E_i(\varrho)
=
\operatorname{Tr}_{e_i}
\left[
V_i
\left(
\varrho\otimes\eta_{e_i}
\right)
V_i^\dagger
\right],
\qquad
i=1,2.
\label{eq:open_stinespring}
\end{equation}
Here $e_i$ is a fresh local dilation degree of freedom associated with the $i$-th interaction, $\eta_{e_i}$ is its initial state, and $V_i$ is a unitary on the enlarged Hilbert space. The tensor-product form in Eq.~\eqref{eq:open_stinespring} assumes that $e_i$ is initially uncorrelated with the state entering that interaction. The corresponding TN representation is shown in Fig.~\ref{fig:app_tensor_dissipation}. Relative to the unitary case, $U_i$ is replaced by the enlarged unitary $V_i$, and the additional index associated with $e_i$ is introduced and traced out locally within the same interaction.

\begin{figure}[htbp]
    \centering
    \includegraphics[width=0.8\linewidth]
    {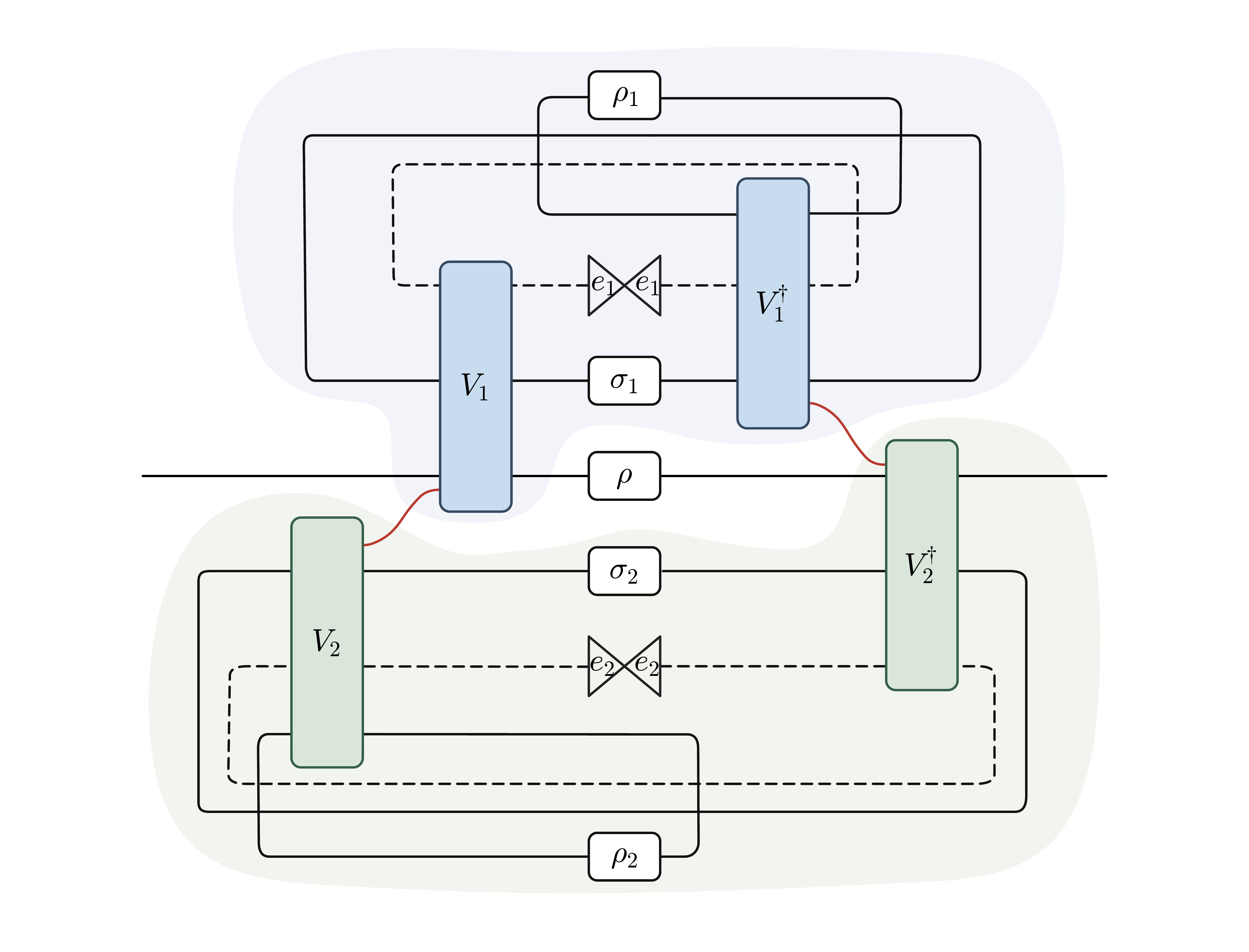}
    \caption{Stinespring representation of the two open interactions. The unitaries $U_1$ and $U_2$ are replaced by enlarged unitaries $V_1$ and $V_2$. The dilation degrees of freedom $e_1$ and $e_2$ are initialised and traced out locally within their respective interactions, so their indices do not cross the memory cut.
    }
    \label{fig:app_tensor_dissipation}
\end{figure}

Because each dilation degree of freedom is local to a single interaction, its TN indices close within the corresponding contracted block. The local dilation therefore does not introduce an additional bond across the memory cut. Replacing $U_i$ by $V_i$ consequently leaves the memory-bond counting unchanged. Restoring the isolated reference does not change this cut. Hence the reference-assisted interference matrix still obeys
\begin{equation}
\operatorname{rank}(M_{RS})
\leq
d_{\mathrm m}^{2},
\end{equation}
and therefore
\begin{equation}
d_{\mathrm m}
\geq
\left\lceil
\sqrt{\operatorname{rank}(M_{RS})}
\right\rceil .
\label{eq:open_reference_memory_bound}
\end{equation}
Thus Theorem~\ref{thm:memory_bound} remains valid when either or both system-environment interactions are open-system channels. The locality of the dilation degrees of freedom is essential. If a degree of freedom created during the first interaction is retained and can later influence the second interaction, it is no longer local to the first interaction. It must instead be included in the memory carrier and therefore contributes to the operational memory dimension $d_{\mathrm m}$.

\subsection{Classical memory}
\label{app:classical-memory}

The memory connecting the two interactions need not be quantum. A purely classical memory can be represented by an orthogonal label
$ 
k=1,\ldots,d_{\rm cl},
$
that is retained after the first interaction and remains available to the second~\cite{Giarmatzi2021Witnessing,Taranto2024ClassicalMemory}. The label may be generated or updated during the first interaction; what matters is that only classical information is carried between the two interactions. As a simple example, a unitary $U_k$ may be selected randomly and the same classical value $k$ used in both interactions. 
Each interaction then appears individually as a random-unitary mixture, while the shared value of $k$ correlates the two interactions in time. For a classical memory, the forward and backward memory indices carry the same classical value $k$. Thus, the pair of independent memory indices in the quantum case is replaced by a single classical branch index.

Let $p_k$ be the probability of the classical memory value $k$. For the pure reference--system input $\ket{\Psi}_{RS}$, conditioning the two contracted blocks on $k$ gives
\begin{equation}
M_{RS}
=
\sum_{k=1}^{d_{\rm cl}}
p_k
\left(
I_R\otimes A_k
\right)
\ket{\Psi}\!\bra{\Psi}_{RS}
\left(
I_R\otimes B_k
\right).
\label{eq:classical_memory_decomposition}
\end{equation}
Define
\begin{equation}
\ket{u_k}
:=
\sqrt{p_k}
\left(
I_R\otimes A_k
\right)
\ket{\Psi}_{RS},
\qquad
\bra{v_k}
:=
\sqrt{p_k}\,
{}_{RS}\!\bra{\Psi}
\left(
I_R\otimes B_k
\right).
\end{equation}
Then
\begin{equation}
M_{RS}
=
\sum_{k=1}^{d_{\rm cl}}
\ket{u_k}\!\bra{v_k}.
\label{eq:classical_memory_outer_products}
\end{equation}
For each fixed classical value $k$, the corresponding term is a single outer product and therefore has rank at most one. Since there are at most $d_{\rm cl}$ classical values
\begin{equation}
\operatorname{rank}(M_{RS})
\leq
d_{\rm cl}.
\label{eq:classical_memory_bound}
\end{equation}
Thus, if
$\operatorname{rank}(M_{RS})=r$,
any purely classical realisation of this form requires at least $r$ distinguishable memory states
\begin{equation}
d_{\rm cl}\geq r.
\end{equation}
A rank larger than one can therefore arise entirely from classical temporal memory; quantum coherence in the memory is not required. Conversely, rank one does not exclude the presence of classical memory.

\subsection{Residual system-memory dynamics during the interval}
\label{app:intermediate_dynamics}

The protocol assumes that the system is decoupled from the memory during the interval between the two interactions. In practice, this switch-off may be imperfect and leave a residual system-memory coupling. Here we examine how the resulting residual dynamics affects the memory-dimension bound. Evolution acting only on the memory during the interval does not cause a problem: it can be absorbed into the propagation of the memory carrier between the two interactions. The relevant modification is an additional interaction involving the system itself. As throughout this section, the TN diagrams suppress the isolated reference $R$ for clarity. Restoring $R$ does not change the memory cuts, so the final bounds are stated for the reference-assisted interference matrix $M_{RS}$.

\paragraph{Residual unitary interaction.}

We first suppose that the residual dynamics is unitary
\begin{equation}
\mathcal U_3(\varrho)
=
U_3\varrho U_3^\dagger .
\label{eq:residual_unitary}
\end{equation}

\begin{figure}[htbp]
    \centering
    \includegraphics[width=0.6\linewidth]
    {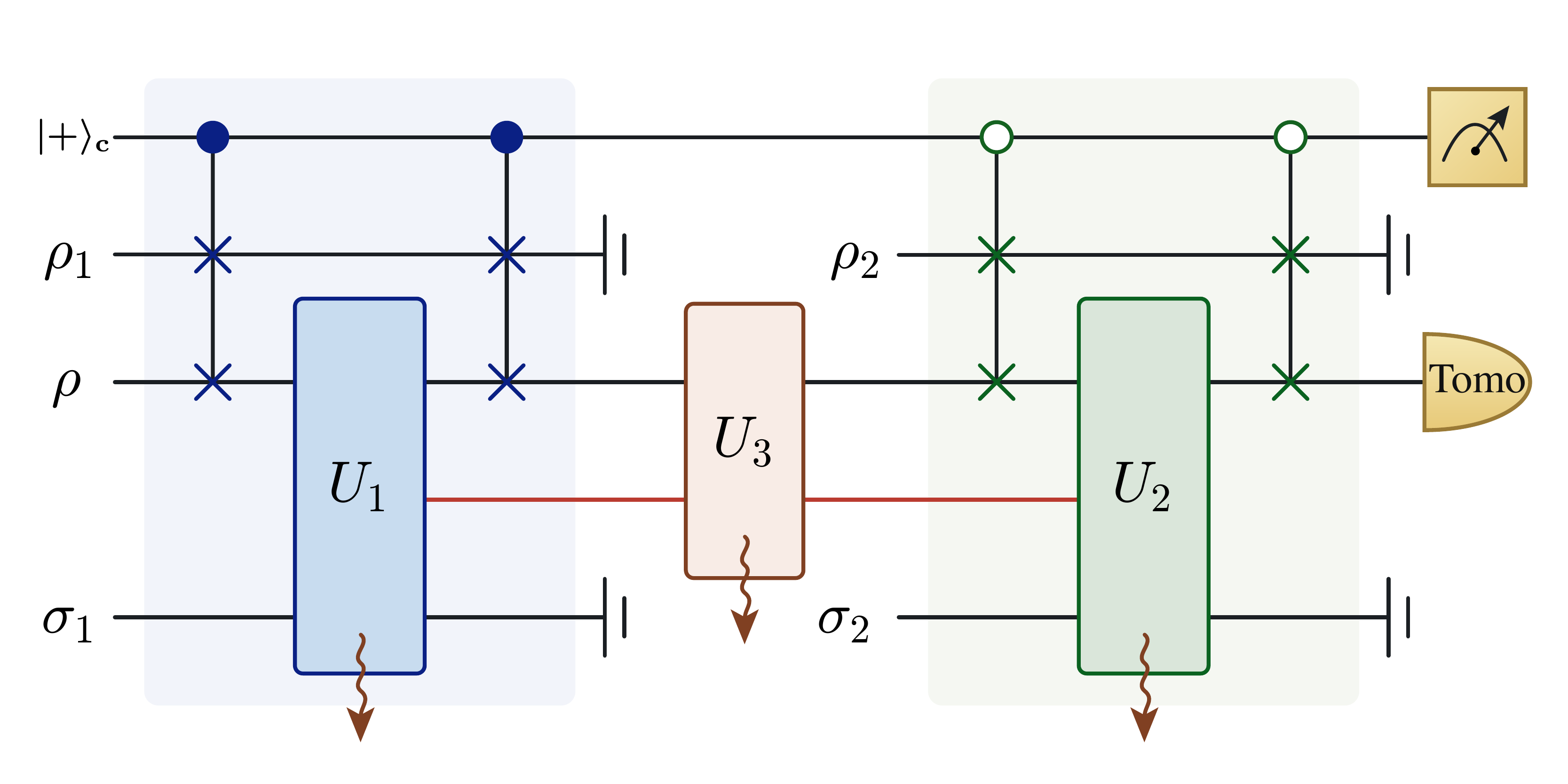}
    \caption{Circuit with a residual unitary system-memory interaction $U_3$ during the interval. The isolated reference $R$ is suppressed for clarity.
    }
    \label{fig:app_circuit_U3_interaction}
\end{figure}
The corresponding circuit is shown in
Fig.~\ref{fig:app_circuit_U3_interaction}. A residual unitary does not change the contraction structure responsible for the memory bound. In the system-only TN shown in Fig.~\ref{fig:direct_truncation_additional}(a), $U_3$ can be absorbed into one contracted block and $U_3^\dagger$ into the other. No additional bond is introduced across the memory cut. Restoring the isolated reference therefore gives
\begin{equation}
\operatorname{rank}
\left(
M_{RS}^{\mathcal U}
\right)
\leq
d_{\mathrm m}^{2},
\label{eq:residual_unitary_rank_bound}
\end{equation}
where $M_{RS}^{\mathcal U}$ denotes the interference matrix obtained in the presence of the residual unitary channel $\mathcal U_3$. Hence
\begin{equation}
d_{\mathrm m}
\geq
\left\lceil
\sqrt{
\operatorname{rank}
\left(
M_{RS}^{\mathcal U}
\right)
}
\right\rceil.
\end{equation}

Importantly, this conclusion does not require the residual unitary interaction to be weak. Any residual unitary can be absorbed into the two contracted blocks without changing the number or dimension of the bonds crossing the memory cut. The exact memory-dimension bound in Theorem \ref{thm:memory_bound} therefore still remains valid.

\paragraph{Residual open dynamics.}

The situation is different if the residual dynamics is an open-system channel $\mathcal E_3$. A Stinespring representation of $\mathcal E_3$ can introduce an additional dilation index that connects the two sides of the contraction, as illustrated in Fig.~\ref{fig:direct_truncation_additional}(b). In this case, the original memory-cut argument no longer directly implies an exact algebraic-rank bound of the form
$\operatorname{rank}(M_{RS})\leq d_{\mathrm m}^{2}$.

\begin{figure}[htbp]
    \centering

    \begin{minipage}{0.495\linewidth}
        \centering
        \includegraphics[width=\linewidth]
        {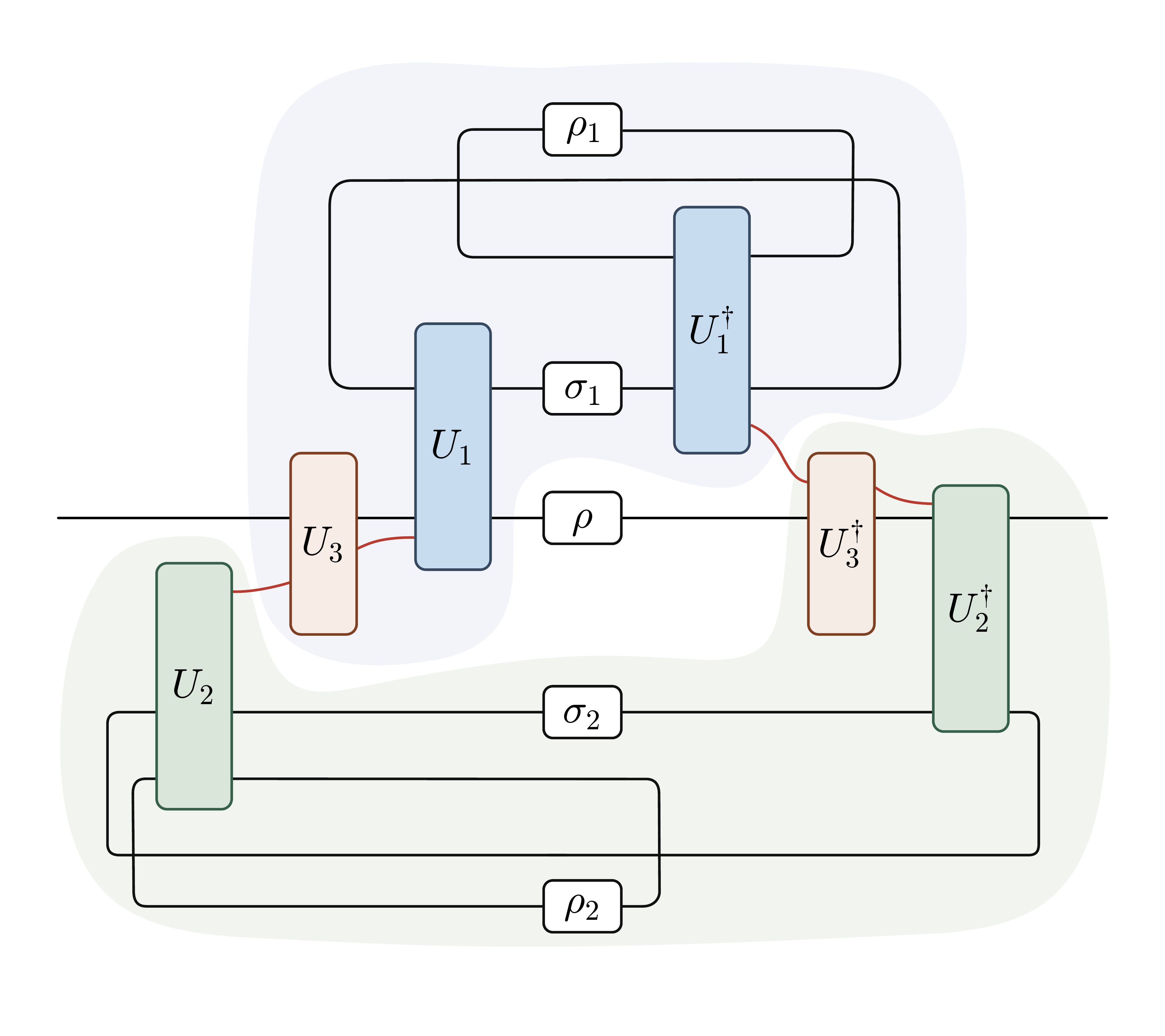}
        \textbf{(a)} Residual unitary coupling $U_3$.
    \end{minipage}
    \begin{minipage}{0.495\linewidth}
        \centering
        \includegraphics[width=\linewidth]
        {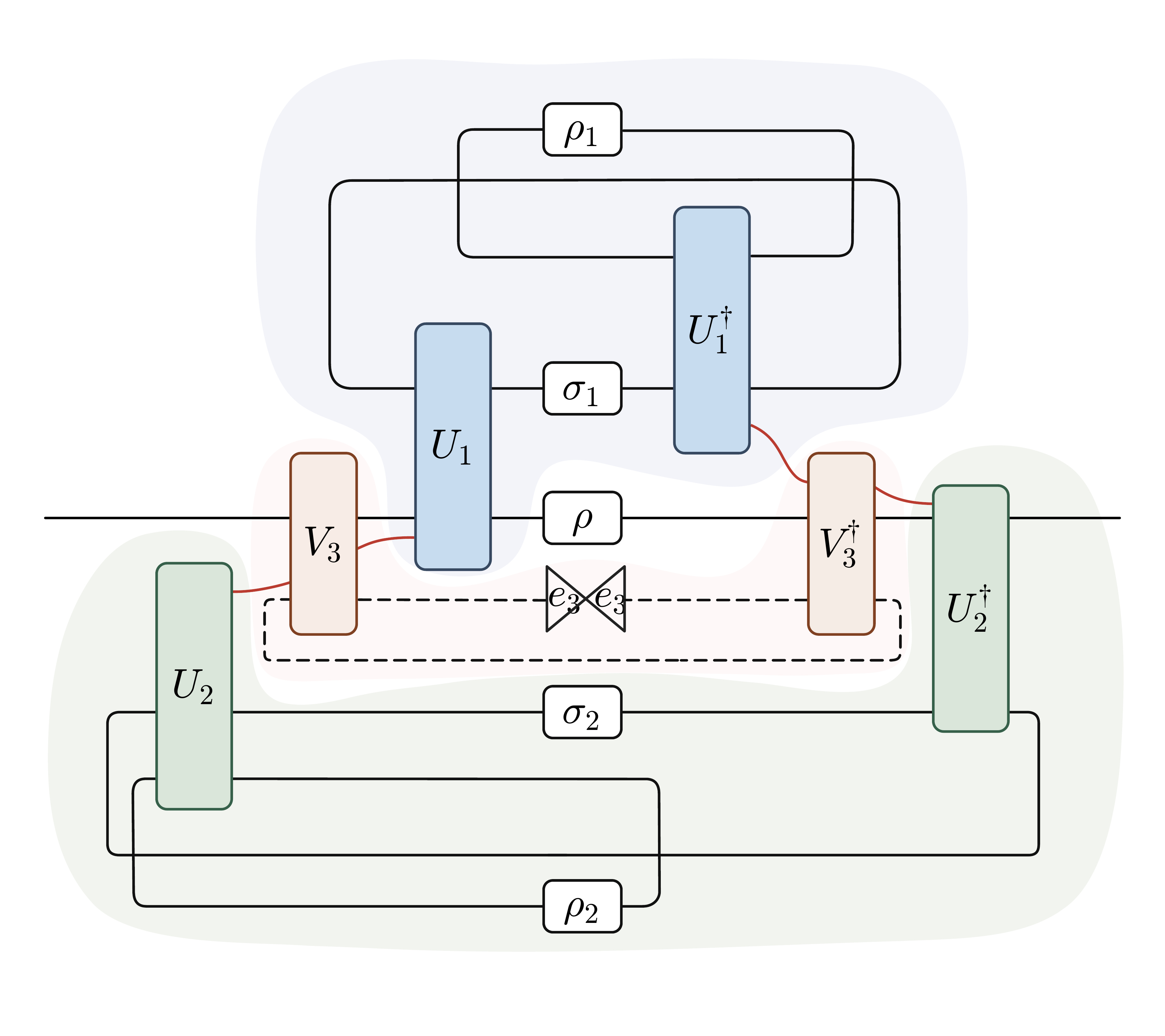}
        \textbf{(b)} Residual open channel $\mathcal E_3$.
    \end{minipage}

    \caption{TN representations of residual system-memory dynamics during the interval.
    (a) A residual unitary can be absorbed into the two contracted blocks without introducing an additional bond across the memory cut.
    (b) A residual open channel \(\mathcal E_3\), represented by a Stinespring dilation \(V_3\), can introduce an additional dilation index across the same cut, so the exact unitary cut argument no longer applies directly. 
    }
    \label{fig:direct_truncation_additional}
\end{figure}

A modified bound can nevertheless be obtained if the residual open dynamics is independently known to be close to a unitary channel. Suppose that, for some residual unitary channel $\mathcal U_3$
\begin{equation}
\frac{1}{2}
\left\|
\mathcal E_3-\mathcal U_3
\right\|_{\diamond}
\leq
\delta .
\label{eq:residual_diamond_bound}
\end{equation}
Here $\delta$ is an independently characterised upper bound on the deviation of the residual open dynamics from the reference unitary channel. The strongest bound of this form is obtained by choosing a unitary channel that minimises the diamond-norm distance, but the argument below applies to any independently established bound in Eq.~\eqref{eq:residual_diamond_bound}.

Let
$M_{RS}^{\mathcal E}$
and
$M_{RS}^{\mathcal U}$
denote the interference matrices obtained with residual dynamics
$\mathcal E_3$
and
$\mathcal U_3$,
respectively. Eq.~\eqref{eq:residual_diamond_bound} limits how far these two matrices can differ. In particular, singular values of
$M_{RS}^{\mathcal E}$
that lie above the scale $\delta$ must correspond to nonzero singular values of
$M_{RS}^{\mathcal U}$.
This gives the following modified memory bound.

\begin{theorem}[Memory bound under residual open dynamics]
\label{thm:robust_memory_bound}

Suppose that the residual open channel $\mathcal E_3$ satisfies
\begin{equation}
\frac{1}{2}
\left\|
\mathcal E_3-\mathcal U_3
\right\|_{\diamond}
\leq
\delta
\end{equation}
for a unitary channel $\mathcal U_3$. Then
\begin{equation}
\#
\left\{
k:
s_k
\left(
M_{RS}^{\mathcal E}
\right)
>
\delta
\right\}
\leq
\operatorname{rank}
\left(
M_{RS}^{\mathcal U}
\right)
\leq
d_{\mathrm m}^{2}.
\label{eq:robust_rank_chain}
\end{equation}
Consequently
\begin{equation}
d_{\mathrm m}
\geq
\left\lceil
\sqrt{
\#
\left\{
k:
s_k
\left(
M_{RS}^{\mathcal E}
\right)
>
\delta
\right\}
}
\right\rceil .
\label{eq:robust_memory_bound}
\end{equation}
\end{theorem}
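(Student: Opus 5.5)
The plan has two parts: show that $M_{RS}^{\mathcal E}$ and $M_{RS}^{\mathcal U}$ are close in operator norm, then apply singular-value perturbation. The second inequality in Eq.~\eqref{eq:robust_rank_chain}, $\operatorname{rank}(M_{RS}^{\mathcal U})\le d_{\mathrm m}^2$, is already established in Eq.~\eqref{eq:residual_unitary_rank_bound}. There, the residual unitary $U_3$ is absorbed into the two contracted blocks and no new bond crosses the memory cut. So only the first inequality needs proof.

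\textbf{Step 1: linearity in the interval channel.} I will view the whole pseudo-control circuit, with the interval channel left as a slot, as a fixed linear map $\mathcal F$ of that channel. The map sends $\mathcal X\mapsto \omega_{cRS}[\mathcal X]$, followed by extraction of the block $\omega_{01}^{RS}$. Since the rest of the circuit is unchanged, $M_{RS}^{\mathcal E}-M_{RS}^{\mathcal U}=\mathcal F(\mathcal E_3-\mathcal U_3)$. On the full circuit state, $\mathcal E_3-\mathcal U_3$ acts on the system--memory registers tensored with the identity on everything else (control, reference, auxiliaries). Hence, by the definition of the diamond norm,
\[
\bigl\|\omega_{cRS}[\mathcal E_3]-\omega_{cRS}[\mathcal U_3]\bigr\|_1\le \bigl\|\mathcal E_3-\mathcal U_3\bigr\|_\diamond\le 2\delta .
\]
The subsequent channels and partial traces are CPTP, so they do not increase the trace norm.

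\textbf{Step 2: passing to the off-diagonal block.} Write $\Delta=\omega_{cRS}[\mathcal E_3]-\omega_{cRS}[\mathcal U_3]$. Its off-diagonal block is $\Delta_{01}=(\bra{0}_c\otimes I)\Delta(\ket{1}_c\otimes I)$, obtained by compressing with partial isometries, so $\|\Delta_{01}\|_\infty\le\|\Delta_{01}\|_1\le\|\Delta\|_1\le 2\delta$. This loses a factor of two relative to the claimed threshold, and it is the main obstacle. To recover it I would use the Hermitian structure: $\Delta$ is Hermitian and traceless, so its positive and negative parts each have trace norm at most $\delta$. Equivalently, $\tfrac12\|\Delta\|_1\le\delta$, so $\Delta=P-Q$ with $P,Q\ge0$ and $\operatorname{Tr}P=\operatorname{Tr}Q\le\delta$. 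For a positive operator, $\|P_{01}\|_\infty\le\sqrt{\|P_{00}\|\,\|P_{11}\|}$, and the compressions of $P$ and $Q$ onto the $\ket{0}_c$ and $\ket{1}_c$ sectors have traces summing to at most $\delta$. Combining these with AM--GM gives
\[
\|\Delta_{01}\|_\infty\le\tfrac12(\operatorname{Tr}P+\operatorname{Tr}Q)\le\delta .
\]
If this refinement fails to give exactly $\delta$, I would settle for whatever constant the argument yields and note that the threshold rescales accordingly.

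\textbf{Step 3: Weyl's inequality.} Weyl's inequality for singular values gives $|s_k(M^{\mathcal E}_{RS})-s_k(M^{\mathcal U}_{RS})|\le\|M^{\mathcal E}_{RS}-M^{\mathcal U}_{RS}\|_\infty\le\delta$. Let $r=\operatorname{rank}(M^{\mathcal U}_{RS})$. For $k>r$ we have $s_k(M^{\mathcal U}_{RS})=0$, so $s_k(M^{\mathcal E}_{RS})\le\delta$. Hence at most $r$ singular values of $M^{\mathcal E}_{RS}$ exceed $\delta$, which proves the first inequality in Eq.~\eqref{eq:robust_rank_chain}. Taking square roots and using integrality of $d_{\mathrm m}$ then yields Eq.~\eqref{eq:robust_memory_bound}.
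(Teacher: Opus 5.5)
Your proposal is correct. Its skeleton matches the paper's proof: contractivity of the rest of the circuit under $\mathcal E_3-\mathcal U_3$, a bound on the off-diagonal control block, Weyl's singular-value perturbation inequality, and the counting step combined with the already-established $\operatorname{rank}(M_{RS}^{\mathcal U})\le d_{\mathrm m}^2$.

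The real difference is how you recover the factor of two in Step 2.

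\textbf{The paper's route.} It writes the off-diagonal part as a pinching, $\Delta\omega_{\mathrm{off}}=\tfrac12[\Delta\omega-(Z_c\otimes I)\Delta\omega(Z_c\otimes I)]$, so $\|\Delta\omega_{\mathrm{off}}\|_1\le\|\Delta\omega\|_1$. It then notes that the Hermitian block matrix with off-diagonal blocks $\Delta M,\Delta M^\dagger$ carries each singular value of $\Delta M$ twice. This gives $\|\Delta M\|_1\le\tfrac12\|\Delta\omega\|_1\le\delta$.

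\textbf{Your route.} You use the Jordan decomposition $\Delta=P-Q$, the positive-block Cauchy--Schwarz bound $\|P_{01}\|_\infty\le\sqrt{\|P_{00}\|_\infty\|P_{11}\|_\infty}$, AM--GM, and $\|P_{jj}\|_\infty\le\operatorname{Tr}P_{jj}$. Together these give $\|\Delta_{01}\|_\infty\le\tfrac12(\operatorname{Tr}P+\operatorname{Tr}Q)=\tfrac12\|\Delta\|_1\le\delta$.

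This chain does close at exactly $\delta$, so your fallback of rescaling the threshold is unnecessary. Two small points on the write-up:
\begin{itemize}
\item Tracelessness of $\Delta$ is not needed for the final inequality, since $\operatorname{Tr}P+\operatorname{Tr}Q=\|\Delta\|_1$ for any Hermitian $\Delta$.
\item Your remark that the sector traces sum to at most $\delta$ should be read separately for $P$ and for $Q$. Each off-diagonal block is bounded by half its own trace, and only the sum of the two traces enters the bound.
\end{itemize}

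\textbf{Comparison.} The paper's argument is shorter, uses only Hermiticity of $\Delta\omega$, and yields the stronger trace-norm bound on $\Delta M$. Yours yields only the operator-norm bound, which is exactly what Weyl's inequality needs. It costs an extra lemma on positive block operators; writing $P_{01}=P_{00}^{1/2}KP_{11}^{1/2}$ with $\|K\|_\infty\le1$ and applying H\"older would upgrade it to a trace-norm bound as well.
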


\begin{proof}

Keep all other parts of the protocol fixed, and let
$\omega_{cRS}^{\mathcal E}$
and
$\omega_{cRS}^{\mathcal U}$
denote the final retained states obtained with
$\mathcal E_3$
and
$\mathcal U_3$,
respectively. The preparation, the rest of the circuit, and the final partial trace are the same in the two cases. Contractivity therefore gives
\begin{equation}
\frac{1}{2}
\left\|
\omega_{cRS}^{\mathcal E}
-
\omega_{cRS}^{\mathcal U}
\right\|_1
\leq
\frac{1}{2}
\left\|
\mathcal E_3-\mathcal U_3
\right\|_{\diamond}
\leq
\delta .
\label{eq:residual_output_trace_bound}
\end{equation}
Define
\begin{equation}
\Delta\omega
:=
\omega_{cRS}^{\mathcal E}
-
\omega_{cRS}^{\mathcal U},
\qquad
\Delta M
:=
M_{RS}^{\mathcal E}
-
M_{RS}^{\mathcal U}.
\end{equation}
In the computational basis of the control, $\Delta M$ is the off-diagonal block of the Hermitian operator $\Delta\omega$. Its off-diagonal part can be written as
\begin{equation}
\Delta\omega_{\mathrm{off}}
=
\begin{pmatrix}
0 & \Delta M \\
\Delta M^\dagger & 0
\end{pmatrix}.
\label{eq:residual_off_diagonal_block}
\end{equation}
Equivalently, if
\begin{equation}
Z_c
=
\ket{0}\!\bra{0}_{c}
-
\ket{1}\!\bra{1}_{c},
\end{equation}
then
\begin{equation}
\Delta\omega_{\mathrm{off}}
=
\frac{1}{2}
\left[
\Delta\omega
-
\left(
Z_c\otimes I_{RS}
\right)
\Delta\omega
\left(
Z_c\otimes I_{RS}
\right)
\right].
\end{equation}
By the triangle inequality and unitary invariance of the trace norm
\begin{equation}
\left\|
\Delta\omega_{\mathrm{off}}
\right\|_1
\leq
\left\|
\Delta\omega
\right\|_1.
\end{equation}
The block matrix in
Eq.~\eqref{eq:residual_off_diagonal_block}
has singular values equal to those of $\Delta M$, each appearing twice. Hence
\begin{equation}
\left\|
\Delta\omega_{\mathrm{off}}
\right\|_1
=
2
\left\|
\Delta M
\right\|_1.
\end{equation}
Combining these relations with
Eq.~\eqref{eq:residual_output_trace_bound} gives
\begin{equation}
\left\|
M_{RS}^{\mathcal E}
-
M_{RS}^{\mathcal U}
\right\|_{\mathrm{op}}
\leq
\left\|
M_{RS}^{\mathcal E}
-
M_{RS}^{\mathcal U}
\right\|_1
\leq
\delta .
\label{eq:residual_matrix_bound}
\end{equation}

The standard perturbation bound for singular values therefore gives, for every $k$,
\begin{equation}
\left|
s_k
\left(
M_{RS}^{\mathcal E}
\right)
-
s_k
\left(
M_{RS}^{\mathcal U}
\right)
\right|
\leq
\delta .
\label{eq:residual_singular_value_bound}
\end{equation}
Thus, whenever
\begin{equation}
s_k
\left(
M_{RS}^{\mathcal E}
\right)
>
\delta ,
\end{equation}
we have
\begin{equation}
s_k
\left(
M_{RS}^{\mathcal U}
\right)
\geq
s_k
\left(
M_{RS}^{\mathcal E}
\right)
-
\delta
>
0.
\end{equation}
Every singular value of
$M_{RS}^{\mathcal E}$
that lies above the independently characterised scale $\delta$ therefore corresponds to a nonzero singular value of
$M_{RS}^{\mathcal U}$. Hence
\begin{equation}
\#
\left\{
k:
s_k
\left(
M_{RS}^{\mathcal E}
\right)
>
\delta
\right\}
\leq
\operatorname{rank}
\left(
M_{RS}^{\mathcal U}
\right).
\end{equation}
Since $\mathcal U_3$ is unitary,
Eq.~\eqref{eq:residual_unitary_rank_bound} gives
\begin{equation}
\operatorname{rank}
\left(
M_{RS}^{\mathcal U}
\right)
\leq
d_{\mathrm m}^{2}.
\end{equation}
Combining the two inequalities proves
Eq.~\eqref{eq:robust_memory_bound}.
\end{proof}

The two cases above should be distinguished. A residual unitary interaction leaves the exact algebraic-rank bound unchanged, regardless of its strength. A residual open channel can instead invalidate the direct exact-rank argument, but an independently characterised deviation $\delta$ from a unitary channel yields the modified certificate in Eq.~\eqref{eq:robust_memory_bound}. If no independent bound on the residual open dynamics is available, this argument does not provide a rigorous value of $\delta$, and an arbitrary numerical cutoff cannot replace such a bound. Conversely, if an additional degree of freedom generated during the interval is retained and can later influence the second interaction, it should not be treated as a perturbative error. It forms part of the memory carrier and must therefore be included in the operational memory dimension $d_{\mathrm m}$.

\section{Memory-dimension bound in the quantum-comb representation}
\label{sec:comb-memory-rank}

The preceding section proves the memory-dimension bound using a tensor-network representation of the pseudo-control circuit.
Here we give the same argument in the quantum-comb representation~\cite{Bisio2012MemoryCost,Chiribella2008Supermaps}, which may be more natural for readers familiar with multitime processes.
Fig.~\ref{fig:comb_tensor_to_comb} shows how the circuit tensor network is rearranged and coarse-grained into the corresponding comb representation.
In this representation, the memory connecting the two interactions appears as a pair of ket and bra indices, whose dimensions directly give the same rank bound.

Fig.~\ref{fig:comb_tensor_to_comb}(a) shows the circuit TN used in the preceding proof.
Panel~(b) rearranges the same contraction so that the two interaction regions and the memory bonds connecting them are more explicit; no tensor contraction is changed in this step.
Contracting the tensors within the two regions gives the coarse-grained network in panel~(c).
The two auxiliary system inputs $\rho_1$ and $\rho_2$, together with the routed system input $\rho$, are still attached, so this network represents the system-only interference matrix $M$ for those particular inputs.
Allowing the two interactions to be open changes only the tensors inside the two coarse-grained regions: as shown in
Sec.~\ref{app:unitary_to_channels}, local dilation degrees of freedom do not cross the memory cut.
Opening the legs occupied by the three fixed inputs therefore gives the corresponding two-interaction comb shown in Fig.~\ref{fig:comb_memory_cut}(a).

\begin{figure}[htbp]
    \centering
    \includegraphics[width=0.98\linewidth]
    {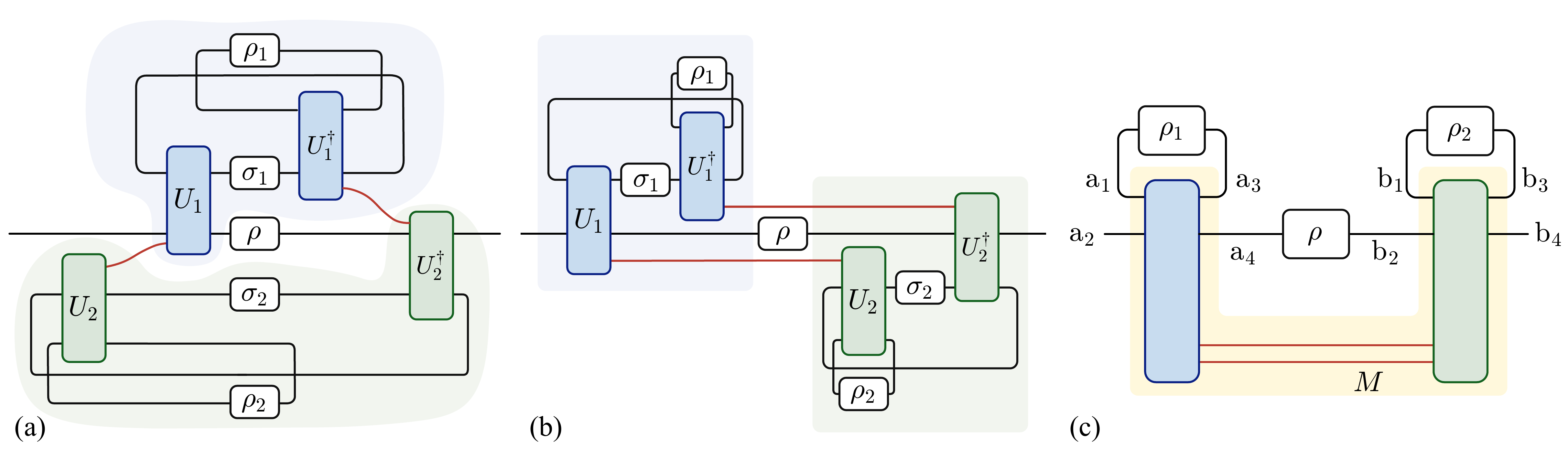}
    \caption{
    From the circuit TN to the fixed-input network.
    (a) Circuit TN of the pseudo-control construction, with the isolated reference $R$ suppressed for clarity.
    (b) Equivalent rearrangement of the same tensor contraction, making the two interaction regions and the memory bonds between them explicit.
    (c) Coarse-grained network obtained by contracting the tensors within the two regions.
    The auxiliary system inputs $\rho_1$ and $\rho_2$ and the routed system input $\rho$ remain attached.
    Opening the corresponding input legs gives the comb in Fig.~\ref{fig:comb_memory_cut}(a).
    }
    \label{fig:comb_tensor_to_comb}
\end{figure}

Fig.~\ref{fig:comb_memory_cut}(a) shows the Choi operator \(\Upsilon_{2:0}\) of the two-interaction comb, obtained by leaving these input legs open. It therefore represents the same two-interaction process before these particular system inputs are inserted.
To expose the temporal memory cut between the two interactions, we group the open indices on the two sides of the network into a composite row index $\mathbf a$ and a composite column index $\mathbf b$.
This gives the matricisation $\widetilde{\Upsilon}$ shown in panel~(b).
The bending of the open legs only regroups tensor indices to form a matrix; it does not represent an additional physical operation.
In this form, the dashed cut crosses the two red bonds associated with the memory carried between the first and second interactions.

\begin{figure}[htbp]
    \centering
    \includegraphics[width=0.72\linewidth]
    {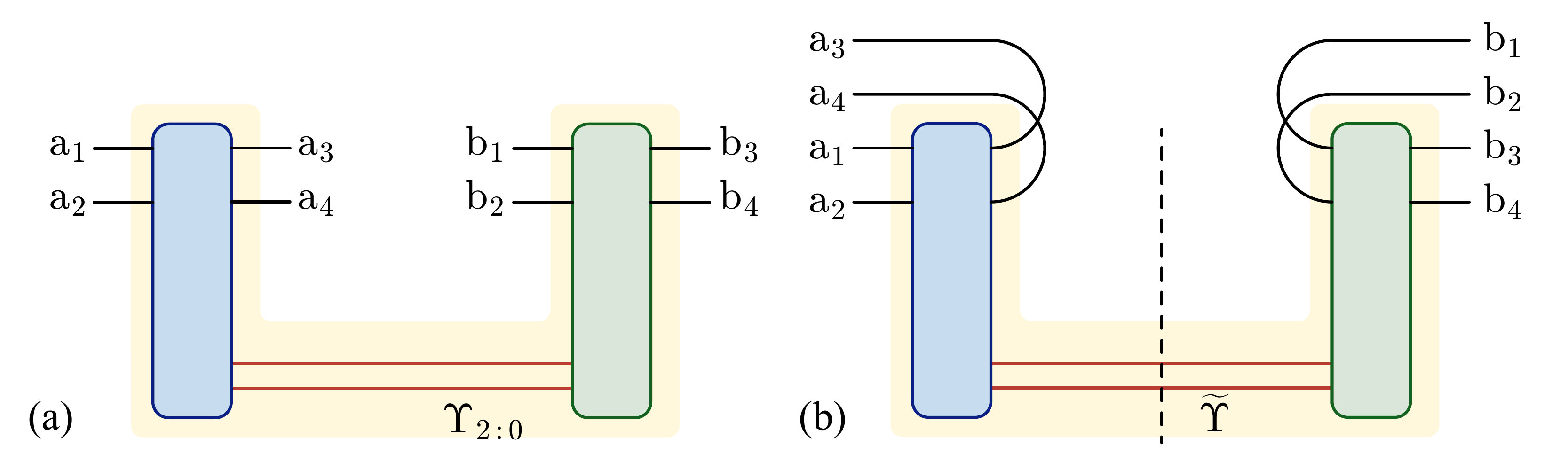}
    \caption{Quantum-comb representation of the memory-dimension bound.
    (a) Two-interaction comb tensor $\Upsilon_{2:0}$ obtained by opening the input legs of the fixed-input network in Fig.~\ref{fig:comb_tensor_to_comb}(c).
    (b) Left-right matricisation $\widetilde{\Upsilon}$ of the same comb.
    The bending of the open legs only regroups tensor indices, while the dashed line separates the two interaction regions and crosses the two memory bonds.
    The two red bonds are the ket and bra indices of the same physical memory, each of dimension $d_{\mathrm m}$.
    The isolated reference $R$ is suppressed for clarity.
    }
    \label{fig:comb_memory_cut}
\end{figure}

Let the physical memory have Hilbert-space dimension $d_{\mathrm m}$.
In the operator representation, this memory contributes one ket index and one bra index, each of dimension $d_{\mathrm m}$.
We label these indices by
\begin{equation}
\mu,\nu
=
1,\ldots,d_{\mathrm m}.
\label{eq:comb_memory_indices}
\end{equation}
Cutting the two memory bonds in Fig.~\ref{fig:comb_memory_cut}(b) gives
\begin{equation}
\widetilde{\Upsilon}_{\mathbf a,\mathbf b}
=
\sum_{\mu,\nu=1}^{d_{\mathrm m}}
A_{\mathbf a;\mu\nu}\,
B_{\mu\nu;\mathbf b}.
\label{eq:comb-factorization}
\end{equation}
Eq.~\eqref{eq:comb-factorization} factorises $\widetilde{\Upsilon}$ through the pair of memory indices $(\mu,\nu)$, which span a space of dimension at most $d_{\mathrm m}^{2}$.
Hence
\begin{equation}
\operatorname{rank}
\left(
\widetilde{\Upsilon}
\right)
\leq
d_{\mathrm m}^{2}.
\label{eq:upsilon-rank-bound}
\end{equation}
The square arises because the operator representation contains the ket and bra indices of the same $d_{\mathrm m}$-dimensional physical memory.
Equivalently, \(\operatorname{rank}(\widetilde{\Upsilon})\) is the operator-space bond rank of the comb Choi operator across this temporal cut, and is therefore bounded by \(d_{\mathrm m}^{2}\). 
The same doubled memory space appears as the internal operator-space bond in matrix-product-operator representations of process tensors~\cite{Pollock2018ProcessTensor}. 

We finally reinsert the two auxiliary system inputs and the pure reference-system input $\ket{\Psi}_{RS}$ used in the main text.
Because $\ket{\Psi}\!\bra{\Psi}_{RS}$ is rank one, its ket and bra amplitudes can be absorbed into the two sides of the matricisation.
The isolated reference is retained and does not introduce an additional bond between the two interactions.
Together with the auxiliary inputs, these contractions can therefore be absorbed into fixed left and right maps, giving
\begin{equation}
M_{RS}
=
L_{\Psi}\,
\widetilde{\Upsilon}\,
R_{\Psi}.
\label{eq:comb_MRS_factorisation}
\end{equation}
Multiplication by fixed matrices cannot increase rank, and hence
\begin{equation}
\operatorname{rank}(M_{RS})
\leq
\operatorname{rank}
\left(
\widetilde{\Upsilon}
\right)
\leq
d_{\mathrm m}^{2}.
\label{eq:comb_memory_rank_bound}
\end{equation}
This recovers the memory-dimension bound of Theorem~\ref{thm:memory_bound} directly from the memory cut of the quantum comb.

\section{Numerical construction details and investigation}
\label{sec:numerical_investigation}

This section gives the numerical details underlying the results in the main text.
We describe the construction and propagation of the interference matrix, the many-body models and numerical resolution, and the additional checks and results supporting the main analysis.

\subsection{Numerical construction of the reference-assisted interference matrix}
\label{subsec:numerical_interference_matrix}

In the numerical analysis, we evaluate the same reference-assisted interference matrix $M_{RS}$ introduced in the main text.
We use the maximally entangled reference-system input
\begin{equation}
\ket{\Phi_{d_S}}_{RS}
=
\frac{1}{\sqrt{d_S}}
\sum_{i=0}^{d_S-1}
\ket{i}_{R}\ket{i}_{S},
\label{eq:numerical_maximally_entangled_input}
\end{equation}
and take the two auxiliary system inputs to be maximally mixed
\begin{equation}
\rho_1
=
\rho_2
=
\frac{I_{d_S}}{d_S}.
\label{eq:numerical_auxiliary_inputs}
\end{equation}
The environment preparation and the two-interaction dynamics are specified in the following subsection. For fixed auxiliary inputs, environment preparation, and dynamics, linearity of the off-diagonal control sector defines a complex-linear map
\[
\mathcal R:
\mathcal L(\mathcal H_S)
\longrightarrow
\mathcal L(\mathcal H_S),
\]
For an input operator $A$, $\mathcal R(A)$ is the off-diagonal system output generated by the pseudo-control circuit.
We include in $\mathcal R$ the factor $1/2$ from the initial $\ket{0}\!\bra{1}_c$ coherence of the control.
The map $\mathcal R$ is only a convenient numerical representation of the off-diagonal block. Using Eq.~\eqref{eq:numerical_maximally_entangled_input}, the reference-assisted interference matrix is
\begin{align}
M_{RS}
&=
\left(
\operatorname{id}_{R}\otimes\mathcal R
\right)
\left(
\ket{\Phi_{d_S}}\!\bra{\Phi_{d_S}}_{RS}
\right)
\nonumber\\
&=
\frac{1}{d_S}
\sum_{i,j=0}^{d_S-1}
\ket{i}\!\bra{j}_{R}
\otimes
\mathcal R
\left(
\ket{i}\!\bra{j}_{S}
\right).
\label{eq:numerical_MRS_choi_expansion}
\end{align}
Defining the unnormalised Choi representation of $\mathcal R$ by
\begin{equation}
J_{\mathcal R}
:=
\sum_{i,j=0}^{d_S-1}
\ket{i}\!\bra{j}_{R}
\otimes
\mathcal R
\left(
\ket{i}\!\bra{j}_{S}
\right),
\label{eq:numerical_choi_definition}
\end{equation}
we therefore have
\begin{equation}
M_{RS}
=
\frac{1}{d_S}
J_{\mathcal R}.
\label{eq:numerical_MRS_normalised_choi}
\end{equation}
The factor $1/d_S$ comes from the normalisation of the maximally entangled input.
It does not change the exact matrix rank, but it fixes the physical scale of the singular values and is therefore retained throughout the numerical analysis. Eq.~\eqref{eq:numerical_MRS_normalised_choi} also gives a useful numerical shortcut.
We do not propagate the reference register explicitly.
Instead, we evaluate the off-diagonal dynamics on the operator basis $\{\ket{i}\!\bra{j}\}$ and assemble $J_{\mathcal R}$ directly.
After the normalisation in Eq.~\eqref{eq:numerical_MRS_normalised_choi}, the resulting matrix is exactly the reference-assisted interference matrix $M_{RS}$.
Although $\mathcal R$ is written in Choi form, it is generally not a quantum channel, and $J_{\mathcal R}$ need not be Hermitian or positive.
Its singular-value spectrum is discussed below.

\subsection{Models and parameters of many-body systems}
\label{subsec:numerical_models}

The first and second system-environment interactions have the same duration $\tau$ and coupling strength $\lambda$.
During either interaction, the joint state evolves according to
\begin{equation}
\mathcal L_{SE}(\rho_{SE})
=
-i
\left[
H_S+H_E+\lambda H_{SE},
\rho_{SE}
\right]
+
\left(
\operatorname{id}_S\otimes\mathcal D
\right)
(\rho_{SE}).
\label{eq:app_interaction_generator}
\end{equation}
Between the two interactions, the system-environment coupling is switched off and $S$ remains idle, while the environment evolves for an interval time $T$ under
\begin{equation}
\mathcal L_E(\rho_E)
=
-i
\left[
H_E,\rho_E
\right]
+
\mathcal D(\rho_E).
\label{eq:app_interval_generator}
\end{equation}
The model-dependent environment Hamiltonians $H_E$ are specified below. The same local depolarising term is used for models A-D, whereas for model E we set $\mathcal D=0$. For all reported simulations, the system $S$ and environment $E$ are three-qubit chains, with
$ 
n=3, d_S=d_E=2^3.
$
We study two system-environment coupling geometries.
The boundary contact couples only the terminal system qubit $S_n$ to the first environment qubit $E_1$
\begin{equation}
H_{SE}^{\mathrm{bdry}}
=
\sum_{\mu\in\{x,y,z\}}
J_c^\mu\,
\sigma_{S,n}^{\mu}
\sigma_{E,1}^{\mu},
\label{eq:app_boundary_contact}
\end{equation}
whereas the ladder contact couples corresponding sites of the two chains
\begin{equation}
H_{SE}^{\mathrm{ladder}}
=
\frac{1}{n}
\sum_{j=1}^{n}
\sum_{\mu\in\{x,y,z\}}
J_c^\mu\,
\sigma_{S,j}^{\mu}
\sigma_{E,j}^{\mu}.
\label{eq:app_ladder_contact}
\end{equation}
We use $J_c^x=J_c^y=0.80$ and $J_c^z=0.30$.
The factor $1/n$ gives the ladder and boundary contacts the same spectral width for the same values of $J_c^\mu$.
The complete numerical protocol is run separately for the two coupling geometries. The environment is initially maximally mixed
\begin{equation}
\rho_E(0)
=
\frac{I_{d_E}}{d_E}.
\label{eq:app_environment_initial_state}
\end{equation}
Models A-D include local depolarisation of the environment
\begin{equation}
\mathcal D(\rho_E)
=
\sum_{j=1}^{n}
\frac{\gamma_j}{4}
\sum_{\mu\in\{x,y,z\}}
\left(
\sigma_j^\mu
\rho_E
\sigma_j^\mu
-
\rho_E
\right),
\label{eq:app_environment_depolarisation}
\end{equation}
where $\sigma_j^\mu$ acts on environment site $j$.
We use the uniform rate
$
\gamma_j=0.05, j=1,\ldots,n.$ This noise is unital and represents local information loss rather than energy relaxation.

We now specify the five environment Hamiltonians. 
For $n=3$, the nonzero Hamiltonians are compared at a common spectral width.
Models C and E set the reference width, while models B and D are multiplied by overall factors $s_B$ and $s_D$.
Each factor rescales the complete Hamiltonian and therefore preserves all coupling and field ratios within that model.

Model A remains the zero-Hamiltonian baseline and is not included in this matching. In the equations below, $X_j$, $Y_j$, and $Z_j$ denote Pauli operators acting on environment site $j$.
Model A has no internal Hamiltonian
\begin{equation}
H_E^{(A)}=0.
\label{eq:app_model_A_hamiltonian}
\end{equation}
Model B is a commuting longitudinal-field Ising chain
\begin{equation}
H_E^{(B)}
=
s_B
\left[
\sum_{j=1}^{n-1}
Z_jZ_{j+1}
+
0.20
\sum_{j=1}^{n}
Z_j
\right],
\label{eq:app_model_B_hamiltonian}
\end{equation}
with
$
s_B=1.40927178.
 $
The rescaling of $s_B$ (as well as other parameters in the following models) is to match the Hamiltonian spectral width of model B to that of models C and E while preserving its internal coupling and field ratio.

Model C is an XXZ-type chain with a local transverse field on the terminal site
\begin{align}
H_E^{(C)}
={}&
1.10
\sum_{j=1}^{n-1}
\left(
X_jX_{j+1}
+
Y_jY_{j+1}
\right)
\nonumber\\
&+
0.35
\sum_{j=1}^{n-1}
Z_jZ_{j+1}
+
0.20
\sum_{j=1}^{n}
Z_j
+
0.20 X_n.
\label{eq:app_model_C_hamiltonian}
\end{align}
The system Hamiltonian has the same form and coefficients as $H_E^{(C)}$, acting instead on the system chain. Model D is a tilted-field Ising chain
\begin{equation}
H_E^{(D)}
=
s_D
\left[
\sum_{j=1}^{n-1}
Z_jZ_{j+1}
+
0.80
\sum_{j=1}^{n}
X_j
+
0.20
\sum_{j=1}^{n}
Z_j
\right],
\label{eq:app_model_D_hamiltonian}
\end{equation}
with
 $
s_D=1.06472678.
 $
As for model B, this overall rescaling matches the Hamiltonian spectral width to that of models C and E without changing the relative interaction and field strengths.

Finally, model E has the same coherent Hamiltonian as model C
\begin{equation}
H_E^{(E)}
=
H_E^{(C)},
\label{eq:app_model_E_hamiltonian}
\end{equation}
but no depolarisation.
The five models therefore provide controlled comparisons between no coherent internal dynamics (A), commuting Ising dynamics (B), exchange-driven XXZ-type dynamics (C), noncommuting tilted-field Ising dynamics (D), and the closed counterpart of model C (E).
In particular, models C and E differ only by the presence or absence of local depolarisation, allowing the effect of dissipation to be separated from the coherent environment dynamics.

For the numerical scans, the first and second interactions use the same duration and coupling strength.
The interaction-time scan covers $0\leq\tau\leq60$ at fixed $\lambda=1$ and $T=1$, with denser sampling over $0\leq\tau\leq15$.
Unless $\tau_0$ is specified directly, we determine it from the boundary-contact interaction-time scan of model C.
Among the sampled points with $\tau\geq0.20$, we identify the widest contiguous plateau attaining the largest resolved rank and choose the sampled point nearest its midpoint.
The same value $\tau_0$ is then used for all models and for both coupling geometries. The coupling-strength scan covers $0\leq\lambda\leq14$ at fixed $\tau=\tau_0$ and $T=1$.
The interval-time scan covers $1\leq T\leq48$ at fixed $\tau=\tau_0$ and $\lambda=1$.
For models C and D, we additionally evaluate two-dimensional $(\lambda,T)$ and $(\tau,T)$ scans.
The heatmaps use 64 points along the interaction-time or coupling-strength axis and 80 interval-time points, with $0\leq\tau\leq15$ for the $(\tau,T)$ scans.

\subsection{Numerical propagation of the interference matrix}
\label{subsec:numerical_propagation}

The off-diagonal control block does not in general evolve as an ordinary density operator, because its ket and bra histories follow different routed system inputs.
If these two histories evolve under Hamiltonians $H_L$ and $H_R$, respectively, an off-diagonal operator $X$ obeys
\begin{equation}
\frac{dX}{dt}
=
-i
\left(
H_LX-XH_R
\right)
+
\left(
\operatorname{id}_S\otimes\mathcal D
\right)(X).
\label{eq:cross_evolution}
\end{equation}
When $H_L=H_R$, this reduces to the usual GKLS evolution. Each interaction of duration $\tau$ is divided into $N=16$ equal slices, with $\Delta t=\tau/N$.
We use the second-order Strang step
\begin{equation}
X_{m+1}
=
\mathcal E_{\Delta t/2}
\left[
U_L(\Delta t)\,
\mathcal E_{\Delta t/2}(X_m)\,
U_R^\dagger(\Delta t)
\right],
\label{eq:cross_strang_step}
\end{equation}
where
\begin{equation}
U_{L,R}(\Delta t)
=
e^{-iH_{L,R}\Delta t},
\qquad
\mathcal E_t
=
\exp
\left[
t\left(
\operatorname{id}_S\otimes\mathcal D
\right)
\right].
\label{eq:cross_step_operators}
\end{equation}
Adjacent dissipative half steps are combined in the implementation.

To evaluate $M_{RS}$ efficiently, we contract the first and second interactions separately rather than propagating the complete pseudo-control circuit at every parameter point.
During the first interaction, the ket history couples the routed system to the environment, while the bra history couples the first auxiliary system input to the same environment.
After tracing out the auxiliary output, we retain the resulting first-interaction tensor $B^{(1)}$.

During the interval, only the environment evolves.
For each model, its evolution is described by
\begin{equation}
\mathsf S_E(T)
=
\exp
\left(
T\mathcal L_E
\right),
\label{eq:interval_superoperator}
\end{equation}
where $\mathcal L_E$ is defined in Eq.~\eqref{eq:app_interval_generator}. The second interaction is treated in the complementary direction.
We contract the final trace over the auxiliary output and the environment backwards through the second interaction, and then contract the second auxiliary input. This gives a second-interaction tensor $C^{(2)}$.

Combining the first interaction, the interval evolution, and the second interaction gives
\begin{equation}
\left[
M_{RS}
\right]_{(i,a),(j,b)}
=
\frac{1}{2d_S}
\sum_{e,f,e',f'}
B^{(1)}_{aef,i}\,
\left[
\mathsf S_E(T)
\right]_{e'f',ef}\,
C^{(2)}_{bj e'f'}.
\label{eq:boundary_tensor_MRS}
\end{equation}
Here $i,j$ label the system input basis and $a,b$ label the retained system output.
The pairs $(e,f)$ and $(e',f')$ are the environment ket and bra indices before and after the interval.
The factor $1/(2d_S)$ contains the initial control coherence and the normalisation of the maximally entangled reference-system input.
Thus the factor $1/2$, which was absorbed into $\mathcal R$ in Sec.~\ref{subsec:numerical_interference_matrix}, is written explicitly in this contraction.

This reduced contraction avoids directly propagating the full off-diagonal control sector at every scan point while remaining exact for the chosen Strang discretisation.
It gives the physically normalised interference matrix $M_{RS}$, whose singular values are used to determine the resolved rank below.

\subsection{Resolved rank and numerical checks}
\label{subsec:numerical_resolved_rank}

Because $M_{RS}$ is generally non-Hermitian, we determine its rank from its singular values
\[
s_1(M_{RS})\geq s_2(M_{RS})\geq\cdots\geq 0.
\]
Very small singular values cannot be reliably resolved in a finite numerical calculation.
We therefore use the resolved rank
\begin{equation}
r_\epsilon
:=
\#\left\{
k:
s_k(M_{RS})>\epsilon
\right\},
\label{eq:numerical_resolved_rank}
\end{equation}
with the fixed absolute cutoff
$ 
\epsilon=10^{-4}.
$ The same physically normalised $M_{RS}$ and the same absolute cutoff are used at every parameter point. 
A decrease in $r_\epsilon$ means that fewer singular values exceed the same cutoff. The resolved rank should be distinguished from the exact rank appearing in Theorem~\ref{thm:memory_bound}.
For an exactly known matrix
\[
r_\epsilon
\leq
\operatorname{rank}(M_{RS}),
\]
because applying a nonzero threshold can only discard singular values.
We therefore report
\begin{equation}
d_{\mathrm m,\epsilon}
=
\max\left\{
1,
\left\lceil
\sqrt{r_\epsilon}
\right\rceil
\right\}.
\label{eq:numerical_resolved_memory_dimension}
\end{equation}
This is the memory-dimension bound resolved at the chosen cutoff $\epsilon$.

We next check that the numerical errors relevant to the reported calculations are smaller than this cutoff.
We first check the reduced contraction in Eq.~\eqref{eq:boundary_tensor_MRS} against direct propagation of the complete cross circuit.
Because the full calculation grows rapidly with system size, this comparison is performed for representative one- and two-qubit configurations, including dissipative dynamics and both boundary and ladder contacts.
We find no discrepancy larger than the numerical precision used in these tests.

We next examine convergence with respect to the Strang discretisation.
The reported calculations use $N=16$ slices for each interaction, and we compare them with the corresponding $N=8$ calculations at representative parameter points.
For a second-order Strang scheme, we estimate the error of the $N=16$ calculation as
\begin{equation}
\delta_{\mathrm{num}}
=
\max_x
\frac{
\left\|
M_{RS}^{(16)}(x)
-
M_{RS}^{(8)}(x)
\right\|_{\mathrm{op}}
}{2^2-1},
\label{eq:numerical_richardson_error}
\end{equation}
where the maximum is taken over the tested parameter points.
For the reported calculations,
$\delta_{\mathrm{num}}<10^{-4}=\epsilon$
at all tested points. This is a convergence estimate, not a rigorous a priori error bound. 
We use it only to check that the $N=16$ discretisation error is smaller than the cutoff $\epsilon$.

Finally, we perform four null tests.
We set either the duration or the coupling strength of the first or second interaction to zero, one at a time.
In each case the resolved rank returns to at most the rank-one baseline
\begin{equation}
r_\epsilon\leq 1.
\label{eq:numerical_null_test}
\end{equation}
Thus, removing either interaction restores the rank-one baseline.

\subsection{Coupling-geometry dependence beyond the main-text cuts}
\label{subsec:numerical_coupling_geometry}

The main text uses the boundary contact to examine the temporal dependence of the resolved rank, while the comparison between boundary and ladder contacts focuses on the coupling-strength dependence.
Here we test whether the temporal behaviour also persists when every system site is coupled directly to the corresponding environment site.

Fig.~\ref{fig:app_ladder_temporal} shows the interaction-time and interval-time scans for the ladder contact.
The same broad temporal behaviour seen for the boundary contact remains visible.
At short and intermediate interaction times, the two interactions resolve an increasing number of singular directions.
For the dissipative models A-D, the resolved rank is suppressed at longer interaction or interval times, whereas the closed model E retains a substantially larger and more strongly fluctuating rank.
Thus, the basic temporal competition between coherent access to the environment and dissipative loss does not rely on the single-link boundary contact.

\begin{figure}[htbp]
    \centering

    \begin{minipage}{0.485\linewidth}
        \centering
        \includegraphics[width=\linewidth]
        {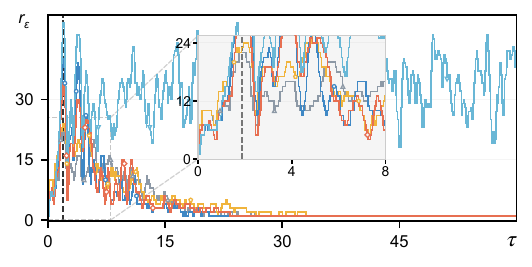}
        \textbf{(a)}
    \end{minipage}
    \hfill
    \begin{minipage}{0.485\linewidth}
        \centering
        \includegraphics[width=\linewidth]
        {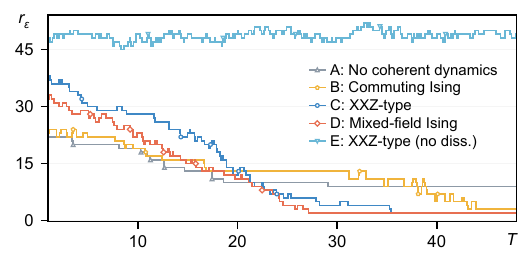}
        \textbf{(b)}
    \end{minipage}

    \caption{Temporal dependence for the ladder contact.
    (a) Resolved rank $r_\epsilon$ versus the common interaction time $\tau$ at $\lambda=1$ and $T=1$.
    The vertical dashed line marks the working time $\tau_0$ selected from the boundary-contact scan and used for both geometries.
    (b) Resolved rank versus interval time $T$ at $\tau=\tau_0$ and $\lambda=1$.
    All panels use the fixed absolute resolution $\epsilon=10^{-4}$.
}
    \label{fig:app_ladder_temporal}
\end{figure}

The stronger geometry dependence appears when the coupling strength is varied.
To determine whether the one-dimensional behaviour in Fig.~\ref{fig:numerical_coupling_geometry} is specific to the chosen interval time, Fig.~\ref{fig:app_lambda_wait_geometry} shows the full $(\lambda,T)$ landscapes for models C and D.
For each model, the boundary and ladder panels use the same colour scale so that their resolved ranks can be compared directly.

For the boundary contact, the larger resolved ranks are concentrated in relatively restricted regions of coupling strength and short interval time.
At stronger coupling, these regions narrow or give way to lower-rank regions.
The ladder contact retains nontrivial resolved rank over a broader range of coupling strengths and shows additional high-rank structure at larger $\lambda$.
The detailed pattern depends on the environment Hamiltonian, but the distinction between local boundary access and distributed ladder access persists over a finite range of interval times rather than only along the one-dimensional cut shown in the main text.

\begin{figure}[htbp]
    \centering

    \begin{minipage}{0.48\linewidth}
        \centering
        \includegraphics[width=0.72\linewidth]
        {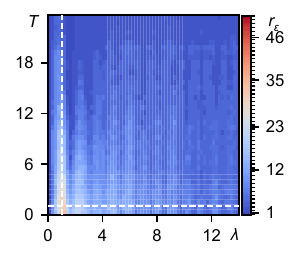}\par
        \textbf{(a)} Model C, boundary
    \end{minipage}
    \hfill
    \begin{minipage}{0.48\linewidth}
        \centering
        \includegraphics[width=0.72\linewidth]
        {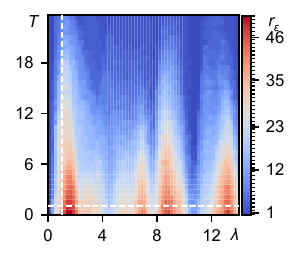}\par
        \textbf{(b)} Model C, ladder
    \end{minipage}

    \vspace{0.8em}

    \begin{minipage}{0.48\linewidth}
        \centering
        \includegraphics[width=0.72\linewidth]
        {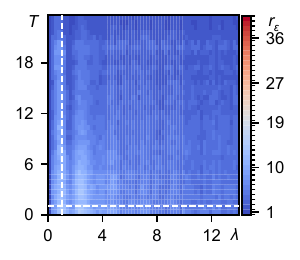}\par
        \textbf{(c)} Model D, boundary
    \end{minipage}
    \hfill
    \begin{minipage}{0.48\linewidth}
        \centering
        \includegraphics[width=0.72\linewidth]
        {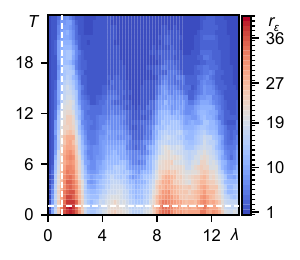}\par
        \textbf{(d)} Model D, ladder
    \end{minipage}

    \caption{Coupling-strength--interval-time landscapes for the two coupling geometries. 
    Resolved rank $r_\epsilon(\lambda,T)$ for (a,b) model C and (c,d) model D. 
    The left column uses the boundary contact and the right column the ladder contact.
    In each model, the two geometries use the same integer colour scale.
    The interaction time is fixed at $\tau=\tau_0$, and the white dashed lines mark $\lambda=1$ and $T=1$.
    All panels use $\epsilon=10^{-4}$.
}
    \label{fig:app_lambda_wait_geometry}
\end{figure}

We display the resolved-rank landscapes rather than separate maps of $d_{\mathrm m,\epsilon}$ because the latter are obtained directly from
$d_{\mathrm m,\epsilon}
=
\max\{1,\lceil\sqrt{r_\epsilon}\rceil\}$
and therefore contain a coarser version of the same information.

Together, these comparisons separate two effects.
The basic temporal suppression of resolved memory under dissipation persists for both coupling geometries, whereas access to the many-body environment at strong coupling depends strongly on how the system is connected to it.

\subsection{Singular-value structure of the resolved rank}
\label{subsec:numerical_singular_structure}

The resolved rank compresses the singular-value spectrum of $M_{RS}$ into a single integer.
To show what the resulting rank changes represent, we examine the spectra directly at several interval times.
Fig.~\ref{fig:app_CE_singular_spectra} compares models C and E for the boundary contact. These two models have the same coherent environment Hamiltonian and differ only by the presence of local depolarisation.

For model C, increasing the interval time shifts an increasing number of singular values below the fixed resolution $\epsilon=10^{-4}$.
The corresponding decrease in $r_\epsilon$ is therefore the discrete record of continuous changes in the singular-value spectrum.
By contrast, the spectra of the closed model E remain much more strongly overlapping as $T$ is varied.
Their remaining changes arise from coherent finite-size dynamics rather than a systematic dissipative suppression of the singular values.
The comparison therefore resolves at the spectral level the different interval-time behaviour of models C and E discussed in the main text.

\begin{figure}[htbp]
    \centering

    \begin{minipage}{0.48\linewidth}
        \centering
        \includegraphics[width=0.78\linewidth]
        {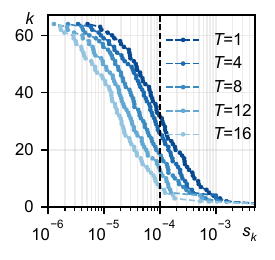}\par
        \textbf{(a)} Model C
    \end{minipage}
    \hfill
    \begin{minipage}{0.48\linewidth}
        \centering
        \includegraphics[width=0.78\linewidth]
        {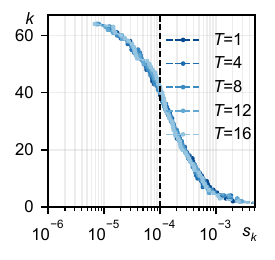}\par
        \textbf{(b)} Model E
    \end{minipage}

    \caption{Singular-value spectra underlying the interval-time dependence. Singular-value index $k$ versus $s_k(M_{RS})$ for (a) dissipative model C and (b) closed model E, at selected interval times $T=1,4,8,12,16$.
    The boundary contact is used with $\tau=\tau_0$ and $\lambda=1$. 
    The vertical dashed line marks the fixed absolute resolution $\epsilon=10^{-4}$. 
    Singular values to the right of this line contribute to $r_\epsilon$.
}
    \label{fig:app_CE_singular_spectra}
\end{figure}

These spectra also clarify the finite-resolution meaning of the numerical results.
A change in $r_\epsilon$ occurs when a singular direction crosses the fixed resolution scale; it is not a discontinuity of the underlying interference matrix.
Conversely, a singular direction that falls below $\epsilon$ is unresolved at the adopted numerical resolution.
Its absence from $r_\epsilon$ does not by itself imply that the corresponding microscopic information has been completely erased.

\end{document}